\newtheorem{theorem}{Theorem}[section]
\newtheorem{proposition}[theorem]{Proposition}
\newtheorem{lemma}[theorem]{Lemma}
\theoremstyle{definition}
\newtheorem{example}[theorem]{Example}
\newtheorem*{assumption}{Assumption}
\DeclareSIUnit{\txn}{txn}
\DeclareSIUnit{\batch}{batch}
\newcommand{\Replica}[1][r]{\MakeUppercase{#1}}
\newcommand{\Client}[1][c]{\MakeLowercase{#1}}
\newcommand{\m}{\mathbf{m}}
\newcommand{\n}{\mathbf{n}}
\newcommand{\f}{\mathbf{f}}
\newcommand{\nf}{\mathbf{nf}}
\newcommand{\Instance}[1]{\mathcal{I}_{#1}}
\newcommand{\Primary}[1]{\mathcal{P}_{#1}}
\newcommand{\Transaction}{T}
\newcommand{\SignMessage}[2]{\langle#2\rangle_{#1}}
\newcommand{\View}{v}
\newcommand{\rn}{\rho}
\newcommand{\Leader}[1]{\mathcal{L}_{#1}}
\newcommand{\StopOp}{\texttt{stop}}
\newcommand{\Stop}[2]{\StopOp(#1; #2)}
\newcommand{\Message}[2]{\textsc{#1}(#2)}
\newcommand{\Hash}[1]{\operatorname{digest}(#1)}
\newcommand{\Name}[1]{\textnormal{\textsc{#1}}}
\newcommand{\PBFT}{\Name{Pbft}}
\newcommand{\ZZ}{\Name{Zyzzyva}}
\newcommand{\HS}{\Name{HotStuff}}
\newcommand{\SBFT}{\Name{Sbft}}
\newcommand{\RBFT}{\Name{Rbft}}
\newcommand{\PoE}{\Name{PoE}}
\newcommand{\MultiBFT}{\Name{RCC}}
\newcommand{\MirBFT}{\Name{MirBFT}}
\newcommand{\RDB}{\Name{ResilientDB}}
\newcommand{\MultiP}{\Name{RCC-P}}
\newcommand{\MultiZ}{\Name{RCC-Z}}
\newcommand{\MultiS}{\Name{RCC-S}}
\newcommand{\Protocol}{\Name{P}}
\newcommand{\BCA}{\Name{bca}}
\newcommand{\Bandwidth}{\mathit{B}}
\newcommand{\TP}[1]{\mathit{T}_{#1}}
\newcommand{\SizeT}{\mathit{st}}
\newcommand{\SizeM}{\mathit{sm}}
\newcommand{\Max}{\textnormal{max}}
\newcommand{\CMax}{\textnormal{cmax}}
\newcommand{\CPBFT}{\textnormal{c}\PBFT{}}
\newcommand{\abs}[1]{\lvert #1 \rvert}
\newcommand{\dsfrac}[2]{#1 / #2}
\newcommand{\union}{\cup}
\newcommand{\difference}{\setminus}
\newcommand{\Concat}{\oplus}
\renewcommand{\div}{\operatorname{div}}
\newcommand{\lfref}[2]{Line~\ref{#1:#2} of Figure~\ref{#1}}
\newcounter{mycounterf}
\newcommand{\GETS}{:=}
\newenvironment{myprotocol}{
    \hrule
    \smallskip
    \footnotesize
    \algsetup{linenosize=\tiny}
    \begin{algorithmic}[1]
        \newenvironment{algenumerate}{\begin{enumerate}}{\end{enumerate}}
        \newcommand{\SPACE}{\item[]}
        
        \newcommand{\TITLE}[2]{\item[] \textbf{\underline{##1}} (##2) \textbf{:}\\[0.5pt]}
        \makeatletter
            \newcommand{\EVENT}[1]{\STATE \textbf{event} ##1 \textbf{do}\begin{ALC@g}}
            \newcommand{\ENDEVENT}{\end{ALC@g}}
        \makeatother
}{
    \end{algorithmic}
    \smallskip
    \hrule
}
\tikzset{
    >=Stealth,
    plot/.append style={baseline,scale=0.475},
    label/.append style={font=\strut\footnotesize},
    dot/.style={circle,scale=0.35,draw=black,fill=black},
}
\pgfplotsset{
    tick label style={font=\Large},
    legend style={font=\LARGE,cells={anchor=west}},
    title style={font=\Large},
    label style={font=\LARGE},
    width=262.5pt,
    height=185pt,
    every axis/.append style={
        ylabel near ticks,
        xlabel near ticks,
        mark size=2.5pt,
        cycle list name=mycyclelist,
        font=\Large,
        y tick label style={
                        /pgf/number format/precision=1,
                        /pgf/number format/fixed,
                        /pgf/number format/fixed zerofill
                    }
    },
    barstyle/.append style={
        ybar,
        bar width={0.5cm},
        enlarge x limits=0.4,
        enlarge y limits={upper=0.025},
        ymin=0,
        xtick=data
    }
}
\newcommand{\plotTheoryTPut}[2]{\begin{tikzpicture}[plot]
        \begin{axis}[title={($\SI{#1}{\txn\per\text{proposal}}$)},ylabel={Throughput (\si{\txn\per\second})},xlabel={Number of replicas ($\n$)},ymin=0,mark size=0.8pt,legend columns=2]
            \addplot table[x={n},y={max}] {#2};
            \addplot table[x={n},y={pbft}] {#2};
            \addplot table[x={n},y={cmax}] {#2};
            \addplot table[x={n},y={cpbft}] {#2};
            \legend{$\TP{\Max}$,$\TP{\PBFT}$,$\TP{\CMax}$,$\TP{\CPBFT}$};
        \end{axis}
    \end{tikzpicture}}
\newcommand{\systemgraph}[5]{
    \begin{tikzpicture}[plot]
        \begin{axis}[barstyle,width={#5},ylabel={#2},symbolic x coords={#3},xticklabels={#4}]
            \addplot table[x={method},y={measure}] {#1};
        \end{axis}
    \end{tikzpicture}}
\newcommand{\axistput}{Throughput (\si{\txn\per\second})}
\newcommand{\axislat}{Latency (\si{\second})}
\newcommand{\resultgraph}[6]{\begin{tikzpicture}[plot]
        \begin{axis}[xlabel={#3},ylabel={#4},title={#2},xtick={#5},#6]
            \addplot table[x={Nodes},y={MN}] {#1};
            \addplot table[x={Nodes},y={MF}] {#1};
            \addplot table[x={Nodes},y={M3}] {#1};
            \addplot table[x={Nodes},y={PBFT}] {#1};
            \addplot table[x={Nodes},y={ZYZ}] {#1};
            \addplot table[x={Nodes},y={SBFT}] {#1};
            \addplot table[x={Nodes},y={HS}] {#1};
        \end{axis}
    \end{tikzpicture}}
\newcommand{\axisnodes}{Number of replicas ($\n$)}
\newcommand{\axisbatches}{Batch size}
\newcommand{\axisticksnodes}{4,16,32,64,91}
\newcommand{\axisticksbatches}{10,50,100,200,400}
\newcommand{\paradigmgraph}[5]{\begin{tikzpicture}[plot]
        \begin{axis}[xlabel={#2},ylabel={#3},xtick={#4},#5]
            \addplot table[x={Nodes},y={MP}] {#1};
            \addplot table[x={Nodes},y={MZ}] {#1};
            \addplot table[x={Nodes},y={MS}] {#1};
        \end{axis}
    \end{tikzpicture}}
\begin{document}

\title{\MultiBFT{}: Resilient Concurrent Consensus for High-Throughput Secure Transaction Processing\thanks{A brief announcement of this work was presented at the 33rd International Symposium on Distributed Computing (DISC 2019)~\cite{discmbft}.}\thanks{This material is based upon work partially supported by the U.S.\ Department of Energy, Office of Science, Office of Small Business Innovation Research, under Award Number DE-SC0020455.}}

\author{\IEEEauthorblockN{Suyash Gupta}
        \IEEEauthorblockA{} \and
        \IEEEauthorblockN{Jelle Hellings}
        \IEEEauthorblockA{\makebox[0pt]{Moka Blox LLC}\\
			  \makebox[0pt]{Exploratory Systems Lab}\\
			  \makebox[0pt]{Department of Computer Science}\\
                          \makebox[0pt]{University of California, Davis}} \and
        \IEEEauthorblockN{Mohammad Sadoghi}
        \IEEEauthorblockA{}
}

\maketitle

\begin{abstract}
Recently, we saw the emergence of consensus-based database systems that promise resilience against failures, strong data provenance, and federated data management. Typically, these fully-replicated systems are operated on top of a primary-backup consensus protocol, which limits the throughput of these systems to the capabilities of a single replica (the primary). 

To push throughput beyond this single-replica limit, we propose \emph{concurrent consensus}. In concurrent consensus, replicas independently propose transactions, thereby reducing the influence of any single replica on performance. To put this idea in practice, we propose our \MultiBFT{} paradigm that can turn any primary-backup consensus protocol into a \emph{concurrent consensus protocol} by running many consensus instances concurrently. \MultiBFT{} is designed with performance in mind and requires minimal coordination between instances. Furthermore, \MultiBFT{} also promises increased resilience against failures. We put the design of \MultiBFT{} to the test by implementing it in \RDB{}, our high-performance resilient blockchain fabric, and comparing it with state-of-the-art primary-backup consensus protocols. Our experiments show that \MultiBFT{} achieves up to $2.75\times$ higher throughput than other consensus protocols and can be scaled to $91$ replicas.
\end{abstract}

\begin{IEEEkeywords}
High-throughput resilient transaction processing, concurrent consensus, limits of primary-backup consensus.
\end{IEEEkeywords}

\section{Introduction}

Fueled by the emergence of blockchain technology~\cite{blockchain_dist,bit_pedigree,bftbook}, we see a surge in consensus-based data processing frameworks and database systems~\cite{bftbook,caper,hyperledger,blockchaindb,blockmeetdb,blockplane}. This interest can be easily explained: compared to traditional distributed database systems, consensus-based systems can provide \emph{more resilience during failures}, can provide strong support for \emph{data provenance}, and can enable \emph{federated data processing} in a heterogeneous environment with many independent participants. Consequently, consensus-based systems can prevent disruption of service due to software issues or cyberattacks that compromise part of the system, and can aid in improving data quality of data that is managed by many independent parties, potentially reducing the huge societal costs of cyberattacks and bad data.

At the core of consensus-based systems are \emph{consensus protocols} that enable independent participants (e.g., different companies) to manage a single common database by reliably and continuously replicating a unique sequence of transactions among all participants. By design, these consensus protocols are resilient and can deal with participants that have crashed, are unable to participate due to local network, hardware, or software failures, or are compromised and act malicious~\cite{blockchain_iot,wild}. As such, consensus protocols can be seen as the fault-resilient counterparts of classical two-phase and three-phase commit protocols~\cite{2pc,3pc,easyc}. Most practical systems use consensus protocols that follow the classical primary-backup design of \PBFT{}~\cite{pbftj} in which a single replica, the \emph{primary}, proposes transactions by broadcasting them to all other replicas, after which all replicas \emph{exchange state} to determine whether the primary correctly proposes the same transaction to all replicas and to deal with failure of the primary. Well-known examples of such protocols are \PBFT{}~\cite{pbftj}, \ZZ{}~\cite{zyzzyvaj}, \SBFT{}~\cite{sbft}, \HS{}~\cite{hotstuff}, \PoE{}~\cite{poe}, and \RBFT{}~\cite{rbft}, and fully-optimized implementations of these protocols are able to process up-to tens-of-thousands transactions per second~\cite{icdcs}. 

\subsection{The Limitations of Traditional Consensus}\label{ss:limit}

Unfortunately, a close look at the design of primary-backup consensus protocols reveals that their design \emph{underutilized available network resources}, which prevents the \emph{maximization of transaction throughput}: the throughput of these protocols is determined mainly by the outgoing bandwidth of the primary. To illustrate this, we consider the maximum throughput  by which primaries can replicate transactions. Consider a system with $\n$ replicas of which $\f$ are faulty and the remaining $\nf = \n - \f$ are non-faulty.  The maximum throughput $\TP{\max}$ of any such protocol is determined by the outgoing bandwidth $\Bandwidth$ of the primary, the number of replicas $\n$, and the size of transactions $\SizeT$: $\TP{\Max} = \dsfrac{\Bandwidth}{((\n-1)\SizeT)}$. No practical consensus protocol will be able to achieve this throughput, as dealing with crashes and malicious behavior requires substantial state exchange. Protocols such as \ZZ{}~\cite{zyzzyvaj} can come close, however, by optimizing for the case in which no faults occur, this at the cost of their ability to deal with faults efficiently.

For \PBFT{}, the minimum amount of state exchange consists of two rounds in which \Name{Prepare} and \Name{Commit} messages are exchanged between all replicas (a quadratic amount, see Example~\ref{ex:pbft} in Section~\ref{sec:design}). Assuming that these messages have size $\SizeM$, the maximum throughput of \PBFT{} is $\TP{\PBFT} = \dsfrac{\Bandwidth}{((\n-1)(\SizeT + 3\SizeM))}$. To minimize overhead, typical implementations of \PBFT{} group hundreds of transactions together, assuring that $\SizeT \gg \SizeM$ and, hence,  $\TP{\Max} \approx \TP{\PBFT{}}$.

The above not only shows a maximum on throughput, but also that primary-backup consensus protocols such as \PBFT{} and \ZZ{} severely \emph{underutilize resources} of non-primary replicas: when $\SizeT \gg \SizeM$, the primary sends and receives roughly $(\n-1) \SizeT$ bytes, whereas all other replicas only send and receive roughly $\SizeT$ bytes. The obvious solution would be to use several primaries. Unfortunately, recent protocols such as \HS{}~\cite{hotstuff}, \Name{Spinning}~\cite{spin}, and \Name{Prime}~\cite{prime} that regularly \emph{switch primaries} all require that a switch from a primary happens after all proposals of that primary are processed. Hence, such primary switching does load balance overall resource usage among the replicas, but does not address the underutilization of resources we observe.

\subsection{Our Solution: Towards Resilient Concurrent Consensus}
The only way to push throughput of consensus-based databases and data processing systems beyond the limit $\TP{\Max}$, is by better utilizing available resources. In this paper, we propose to do so via \emph{concurrent consensus}, in which we use many primaries that \emph{concurrently} propose transactions. We also propose \MultiBFT{}, a paradigm for the realization of concurrent consensus. Our contributions are as follows:
\begin{enumerate}
    \item First, in Section~\ref{sec:propose}, we propose \emph{concurrent consensus} and show that concurrent consensus can achieve much higher throughput than primary-backup consensus by effectively utilizing all available system resources.
    \item Then, in Section~\ref{sec:design}, we propose \MultiBFT{}, a paradigm for turning any primary-backup consensus protocol into a concurrent consensus protocol and that is designed for maximizing throughput in all cases, even during malicious activity.
    \item Then, in Section~\ref{sec:improve}, we show that \MultiBFT{} can be utilized to make systems \emph{more resilient}, as it can mitigate the effects of order-based attacks and throttling attacks (which are not prevented by traditional consensus protocols), and can provide better load balancing.
    \item Finally, in Section~\ref{sec:eval},  we put the design \MultiBFT{} to the test by implementing it in \RDB{},\footnote{\RDB{} is open-sourced and available at \url{https://resilientdb.com}.} our high-performance resilient blockchain fabric, and compare \MultiBFT{} with state-of-the-art primary-backup consensus protocols. Our comparison shows that \MultiBFT{} answers the promises of \emph{concurrent consensus}: it achieves up to $2.75\times$ higher throughput than other consensus protocols, has a peak throughput of $\SI{365}{\kilo\txn\per\batch}$ and can be easily scaled to $91$ replicas.
\end{enumerate}

\section{The Promise of Concurrent Consensus}\label{sec:propose}

To deal with the underutilization of resources and the low throughput of primary-backup consensus, we propose \emph{concurrent consensus}. In specific, we design for a system that is optimized for high-throughput scenarios in which a plentitude of transactions are available, and we make every replica a \emph{concurrent primary} that is responsible for proposing and replicating some of these transactions. As we have $\nf$ non-faulty replicas, we can expect to always \emph{concurrently} propose at least $\nf$ transactions if sufficient transactions are available. Such concurrent processing has the potential to drastically improve throughput: in each round, each primary will send out one proposal to all other replicas, and receive $\nf-1$ proposals from other primaries. Hence, the maximum concurrent throughput is $\TP{\CMax} = \dsfrac{\nf\Bandwidth}{((\n -1)\SizeT + (\nf - 1)\SizeT)}$.

In practice, of course, the primaries also need to participate in state exchange to determine the correct operations of all concurrent primaries. If we use \PBFT{}-style state exchange, we end up with a concurrent throughput of $\TP{\CPBFT} = \dsfrac{\nf\Bandwidth}{((\n - 1)(\SizeT + 3\SizeM) + (\nf-1)(\SizeT + 4(\n-1)\SizeM))}$. In Figure~\ref{fig:theory_limit}, we have sketched the maximum throughputs $\TP{\Max}$, $\TP{\PBFT}$, $\TP{\CMax}$, and $\TP{\CPBFT}$. As one can see, concurrent consensus not only promises greatly improved throughput, but also sharply reduces the costs associated with scaling consensus. We remark, however, that these figures provide best-case \emph{upper-bounds}, as they only focus on bandwidth usage. In practice, replicas are also limited by computational power and available memory buffers that puts limits on the number of transactions they can process in parallel and can execute (see Section~\ref{ss:rdb}).

\begin{figure}
    \centering
    \plotTheoryTPut{20}{\dataTwentyRQ}\plotTheoryTPut{400}{\dataFourHRQ}
    \caption{Maximum throughput of replication in a system with $\Bandwidth = \SI{1}{\giga\bit\per\second}$, $\n = 3\f+1$, $\nf = 2\f+1$, $\SizeM = \SI{1}{\kibi\byte}$, and individual transactions are $\SI{512}{\byte}$. On the \emph{left}, each proposal groups $20$ transactions ($\SizeT = \SI{10}{\kibi\byte}$) and on the \emph{right}, each proposal groups $400$ transactions ($\SizeT = \SI{2}{\mebi\byte}$).}\label{fig:theory_limit}
\end{figure}

\section{\MultiBFT{}: Resilient Concurrent Consensus}\label{sec:design}

The idea behind concurrent consensus, as outlined in the previous section, is straightforward:   improve overall throughput by using all available resources via concurrency. Designing and implementing a concurrent consensus system that operates correctly, even during crashes and malicious behavior of some replicas, is challenging, however. In this section, we describe how to design correct \emph{consensus protocols} that deliver on the promises of concurrent consensus. We do so by introducing \MultiBFT{}, a paradigm that can turn any primary-backup consensus protocol into a concurrent consensus protocol. At its basis, \MultiBFT{} makes every replica a primary of a consensus-instance that replicates transactions among all replicas. Furthermore, \MultiBFT{} provides the necessary coordination between these consensus-instances to coordinate execution and deal with faulty primaries. To assure resilience and maximize throughput, we put the following design goals in \MultiBFT{}:
\begin{enumerate}
    \renewcommand{\theenumi}{D\arabic{enumi}}
    \item\label{goal:consensus} \MultiBFT{} provides \emph{consensus} among replicas on the client transactions that are to be executed and the order in which they are executed.
    \item\label{goal:client} Clients can interact with \MultiBFT{} to force execution of their transactions and learn the outcome of execution.
    \item\label{goal:paradigm} \MultiBFT{} is a design paradigm that can be applied to any primary-backup consensus protocol, turning it into a concurrent consensus protocol.
    \item\label{goal:always} In \MultiBFT{}, consensus-instances with non-faulty primaries are \emph{always} able to propose transactions at maximum throughput (with respect to the resources available to any replica), this independent of faulty behavior by any other replica.
    \item\label{goal:recover} In \MultiBFT{}, dealing with faulty primaries does not interfere with the operations of other consensus-instances.
\end{enumerate}
Combined, design goals~\ref{goal:always} and~\ref{goal:recover} imply that instances with non-faulty primaries can propose transactions \emph{wait-free}: transactions are proposed concurrent to any other activities and does not require any coordination with other instances.

\subsection{Background on Primary-Backup Consensus and \PBFT{}}\label{ss:back}
 Before we present \MultiBFT{}, we provide the necessary background and notation for primary-backup consensus. Typical primary-backup consensus protocols operate in \emph{views}. Within each view, a primary can propose client transactions, which will then be executed by all non-faulty replicas. To assure that all non-faulty replicas maintain the same state, transactions are required to be \emph{deterministic}: on identical inputs, execution of a transaction must always produce identical outcomes.  To deal with faulty behavior by the primary or by any other replicas during a view, three complimentary mechanisms are used:

\paragraph*{Byzantine commit}
The primary uses a \emph{Byzantine commit algorithm} \BCA{} to propose a client transaction $\Transaction$ to all replicas. Next, \BCA{} will perform state exchange to determine whether the primary successfully proposed a transaction.  If the primary is non-faulty, then all replicas will receive $\Transaction$ and determine success. If the primary is faulty and more than $\f$ non-faulty replicas do \emph{not} receive a proposal or receive different proposals than the other replicas, then the state exchange step of \BCA{} will detect this failure of the primary.

\paragraph*{Primary replacement}
The replicas use a \emph{view-change algorithm} to replace the primary of the current view $\View$ when this primary is detected to be faulty by non-faulty replicas. This view-change algorithm will collect the state of sufficient replicas in view $v$ to determine a correct starting state for the next view $\View+1$ and assign new primary that will propose client transactions in  view $\View+1$.

\paragraph*{Recovery}
A faulty primary can keep up to $\f$ non-faulty replicas \emph{in the dark} without being detected, as $\f$ faulty replicas can cover for this malicious behavior. Such behavior is not detected and, consequently, does not trigger a view-change. Via a \emph{checkpoint algorithm} the at-most-$\f$ non-faulty replicas that are in the dark will learn the proposed client transactions that are successfully proposed to the remaining at-least-$\nf-\f > \f$ non-fault replicas (that are not in the dark).

\begin{example}\label{ex:pbft}
Next, we illustrate these mechanisms in \PBFT{}. At the core of \PBFT{} is the \emph{preprepare-prepare-commit} Byzantine commit algorithm. This algorithm operates in three phases, which are sketched in Figure~\ref{fig:pbft}.
\begin{figure}[t!]
    \centering
    \begin{tikzpicture}[yscale=0.35,xscale=1.05]
        \draw[thick,draw=black!75] (0.75, 4.5) edge[green!50!black!90] ++(7.25, 0)
                                   (0.75,   0) edge ++(7.25, 0)
                                   (0.75,   1) edge ++(7.25, 0)
                                   (0.75,   2) edge ++(7.25, 0)
                                   (0.75,   3) edge[blue!50!black!90] ++(7.25, 0);

        \draw[thin,draw=black!75] (1,   0) edge ++(0, 4.5)
                                  (2, 0) edge ++(0, 4.5)
                                  (3.5,   0) edge ++(0, 4.5)
                                  (5, 0) edge ++(0, 4.5)
                                  (6.5,   0) edge ++(0, 4.5)
                                  (8,   0) edge ++(0, 4.5);

        \node[left] at (0.8, 0) {$\Replica_3$};
        \node[left] at (0.8, 1) {$\Replica_2$};
        \node[left] at (0.8, 2) {$\Replica_1$};
        \node[left] at (0.8, 3) {$\Replica[p]$};
        \node[left] at (0.8, 4.5) {$\Client$};

        \path[->] (1, 4.5) edge node[above=-4pt,xshift=6pt,label] {$\SignMessage{\Client}{\Transaction}$} (2, 3)
                  (2, 3) edge (3.5, 2)
                         edge (3.5, 1)
                         edge (3.5, 0)
                           
                  (3.5, 2) edge (5, 0)
                           edge (5, 1)
                           edge (5, 3)

                  (3.5, 1) edge (5, 0)
                           edge (5, 2)
                           edge (5, 3)
                         
                  (5, 2) edge (6.5, 0)
                         edge (6.5, 1)
                         edge (6.5, 3)

                  (5, 1) edge (6.5, 0)
                         edge (6.5, 2)
                         edge (6.5, 3)
                
                  (6.5, 0) edge (8, 4.5)
                  (6.5, 1) edge (8, 4.5)
                  (6.5, 2) edge (8, 4.5)
                  (6.5, 3) edge (8, 4.5)
                           ;
        
        \node[dot,red] at (6.5, 0) {};
        \node[dot,red] at (6.5, 1) {};
        \node[dot,red] at (6.5, 2) {};
        \node[dot,red] at (6.5, 3) {};
        
        \node (x) at (6.5, 4.75) {} edge[thick,red] (6.5, 0);
        \node[above=-4pt] at (x) {Execute $\SignMessage{\Client}{\Transaction}$};

        \node[below=-4pt,label] at (2.75, 0) {\Name{PrePrepare}};
        \node[below=-4pt,label] at (4.25, 0) {\Name{Prepare}};
        \node[below=-4pt,label] at (5.75, 0) {\Name{Commit}};
        \node[below=-4pt,label] at (7.25, 0) {\Name{Inform}};
    \end{tikzpicture}
    \caption{A schematic representation of the \emph{preprepare-prepare-commit protocol} of \PBFT{}. First, a client $\Client$ requests transaction $\Transaction$ and the primary $\Replica[p]$ proposes $\Transaction$ to all replicas via a \Name{PrePrepare} message. Next, replicas commit to $\Transaction$ via a two-phase message exchange (\Name{Prepare} and \Name{Commit} messages). Finally, replicas execute the proposal and inform the client.}\label{fig:pbft}
\end{figure}
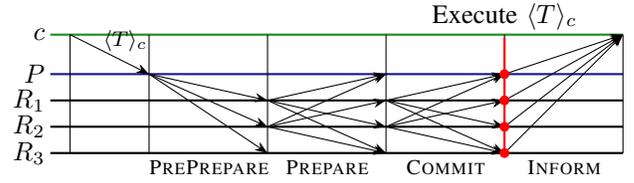

First, the current primary chooses a client request of the form $\SignMessage{\Client}{\Transaction}$, a transaction $\Transaction$ signed by client $\Client$, and proposes this request as the $\rn$-th transaction by broadcasting it to all replicas via a \Name{PrePrepare} message $m$. Next, each non-faulty replica $\Replica$ \emph{prepares} the first proposed $\rn$-th transaction it receives by broadcasting a \Name{Prepare} message for $m$. If a replica $\Replica$ receives $\nf$ \Name{Prepare} messages for $m$ from $\nf$ distinct replicas, then it has the guarantee that any group of $\nf$ replicas will contain a \emph{non-faulty replica} that has received $m$. Hence, $\Replica$ has the guarantee that $m$ can be recovered from any group of $\nf$ replicas, independent of the behavior of the current primary. With this guarantee, $\Replica$ \emph{commits} to $m$ by broadcasting a \Name{Commit} message for $m$. Finally, if a replica $\Replica$ receives $\nf$ \Name{Commit} messages for $m$ from $\nf$ distinct replicas, then it \emph{accepts} $m$. In \PBFT{}, accepted proposals are then \emph{executed} and the client is informed of the outcome.

Each replica $\Replica$ participating in preprepare-prepare-commit uses an internal \emph{timeout value} to detect failure: whenever the primary fails to coordinate a round of preprepare-prepare-commit---which should result in $\Replica$ accepting some proposal---$\Replica$ will detect \emph{failure} of the primary and halt participation in preprepare-prepare-commit. If $\f + 1$ non-faulty replicas detect such a failure and communication is reliable, then they can cooperate to assure that all non-faulty replicas detect the failure. We call this a \emph{confirmed failure} of preprepare-prepare-commit. In \PBFT{}, confirmed failures trigger a \emph{view-change}. Finally, \PBFT{} employs a majority-vote \emph{checkpoint protocol} that allows replicas that are kept in the dark to learn \emph{accepted} proposals without help of the primary.
\end{example}

\subsection{The Design of \MultiBFT{}}

We now present \MultiBFT{} in detail. Consider a primary-backup consensus protocol \Protocol{} that utilizes Byzantine commit algorithm \BCA{} (e.g., \PBFT{} with preprepare-prepare-commit). At the core of applying our \MultiBFT{} paradigm to \Protocol{} is running $\m$, $1 \leq \m \leq \n$, instances of \BCA{} \emph{concurrently}, while providing sufficient coordination between the instances to deal with any malicious behavior.  To do so, \MultiBFT{} makes \BCA{} \emph{concurrent} and uses a checkpoint protocol for per-instance recovery of in-the-dark replicas (see Section~\ref{ss:udetfaulty}). Instead of view-changes, \MultiBFT{} uses a novel wait-free mechanism, that does not involve replacing primaries, to deal with detectable primary failures (see Section~\ref{ss:detfaulty}). \MultiBFT{} requires the following guarantees on \BCA{}:

\begin{assumption}
Consider an instance of \BCA{} running in a system with $\n$ replicas, $\n > 3\f$.
\begin{enumerate}
    \renewcommand{\theenumi}{A\arabic{enumi}}
    \item\label{ass:bca:succeed} If no failures are detected in round $\rn$ of \BCA{} (the round is \emph{successful}), then at least $\nf-\f$ non-faulty replicas have \emph{accepted} a proposed transaction in round $\rn$.
    \item\label{ass:bca:non_diverge} If a non-faulty replica \emph{accepts} a proposed transaction $\Transaction$ in round $\rn$ of \BCA{}, then all other non-faulty replicas that accepted a proposed transaction, accepted $\Transaction$.
    \item\label{ass:bca:recover} If a non-faulty replica \emph{accepts} a transaction $\Transaction$, then $\Transaction$ can be recovered from the state of any subset of $\nf - \f$ non-faulty replicas. 
    \item\label{ass:bca:good} If the primary is non-faulty and communication is reliable, then all non-faulty replicas will accept a proposal in round $\rn$ of \BCA{}.
\end{enumerate}
\end{assumption}
With minor fine-tuning, these assumptions are met by \PBFT{}, \ZZ{}, \SBFT{}, \HS{}, and many other primary-backup consensus protocols, meeting design goal~\ref{goal:paradigm}.

\MultiBFT{} operates in rounds. In each round, \MultiBFT{} replicates $\m$ client transactions (or, as discussed in Section~\ref{ss:limit}, $\m$ sets of client transactions), one for each instance. We write $\Instance{i}$ to denote the $i$-th instance of \BCA{}. To enforce that each instance is coordinated by a distinct primary, the $i$-th replica $\Primary{i}$ is assigned as the primary coordinating $\Instance{i}$. Initially, \MultiBFT{} operates with $\m = \n$ instances. In \MultiBFT{}, instances can fail and be \emph{stopped}, e.g., when coordinated by malicious primaries or during periods of unreliable communication. Each round $\rn$ of \MultiBFT{} operates in three steps:
\begin{enumerate}
\item \emph{Concurrent \BCA{}}. First, each replica participates in $\m$ instances of \BCA{}, in which each instance is proposing a transaction requested by a client among all replicas.
\item \emph{Ordering}. Then, each replica collects all successfully replicated client transactions and puts them in the same---deterministically determined---\emph{order}.
\item \emph{Execution}. Finally, each replica \emph{executes} the transactions of round $\rn$ in order and informs the clients of the outcome of their requested transactions.
\end{enumerate}
Figure~\ref{fig:mbft_replica_overview} sketches a high-level overview of running \emph{$\m$ concurrent instances of \BCA{}}.

\begin{figure}[t!]
    \centering
    \begin{tikzpicture}[xscale=0.7,yscale=0.5]
        \fill[fill=black!10!blue!10] (1, 2) rectangle (3.9, -2.1);
        \fill[fill=black!20!blue!20] (3.9, 2) rectangle (8.75, -2.1);
        \fill[fill=black!30!blue!30] (8.75, 2) rectangle (11.25, -2.1);

        \path (1,  2) edge (1,  -2.1)
              (3.9,  2) edge (3.9,  -2.1)
              (8.75,  2) edge (8.75,  -2.1)
              (11.25, 2) edge (11.25, -2.1);

        \node (i1) at (1.5,  1.5) {$\Instance{1}$};
        \node (i2) at (1.5,  0.5) {$\Instance{2}$};
        \node      at (1.5, -0.25) {$\vdots$};
        \node (im) at (1.5, -1.5) {$\Instance{\m}$};

        \node (bcp1) at (3,  1.5) {\BCA{}};
        \node (bcp2) at (3,  0.5) {\BCA{}};
        \node        at (3, -0.25) {$\vdots$};
        \node (bcpm) at (3, -1.5) {\BCA{}};

        \path[semithick] (3.75, 1.5) edge (3.75, 1.05)
                         (3.75, 1.05) edge (2, 1.05)
                         (2, 1.05) edge[->] (2, 1.5)
                         (3.75, 0.5) edge (3.75, 0.05) 
                         (3.75, 0.05) edge (2, 0.05)
                         (2, 0.05) edge[->] (2, 0.5)
                         (3.75, -1.5) edge (3.75, -1.95) 
                         (3.75, -1.95) edge (2, -1.95)
                         (2, -1.95) edge[->] (2, -1.5);

        \path[->,semithick]
                  (1.75, 1.5) edge (2.5, 1.5) (1.75, 0.5) edge (2.5, 0.5) (1.75, -1.5) edge (2.5, -1.5)
                  (3.5, 1.5) edge (4.375, 1.5) (3.5, 0.5) edge (4.375, 0.5) (3.5, -1.5) edge (4.375, -1.5)
                  (6.25, 0) edge (6.75, 0) (8.25, 0) edge (9.25, 0);
        \path[semithick] (6.25, 1.5) edge (6.25, -1.5) (5.875, 1.5) edge (6.25, 1.5) (5.875, 0.5) edge (6.25, 0.5) (5.875, -1.5) edge (6.25, -1.5);

        \node[align=center] (c1) at (5.125,  1.5) {$\SignMessage{\Client_0}{\Transaction_0}$};
        \node[align=center] (c2) at (5.125,  0.5) {$\SignMessage{\Client_1}{\Transaction_1}$};
        \node[align=center]      at (5.125, -0.25) {$\vdots$};
        \node[align=center] (cm) at (5.125, -1.5) {$\SignMessage{\Client_{\m}}{\Transaction_{\m}}$};

        \node[align=center] (permute) at (7.5,   0) {\strut{}order\\\strut{}requests};
        \node[align=center] (execute) at (10,  0) {\strut{}execute\\\strut{}requests};

        \node[above,font=\bfseries,align=center] at (2.5,  2) {\small \strut{}Concurrent \BCA{}};
        \node[above,font=\bfseries,align=center] at (6.375, 2) {\small \strut{}Ordering};
        \node[above,font=\bfseries,align=center] at (10, 2) {\small \strut{}Execution};
    \end{tikzpicture}
    \caption{A high-level overview of \MultiBFT{} running at replica $\Replica$. Replica $\Replica$ participates in $\m$ concurrent instances of \BCA{} (that run independently and continuously output transactions). The instances yield $\m$ transactions, which are executed in a deterministic order.}\label{fig:mbft_replica_overview}
\end{figure}
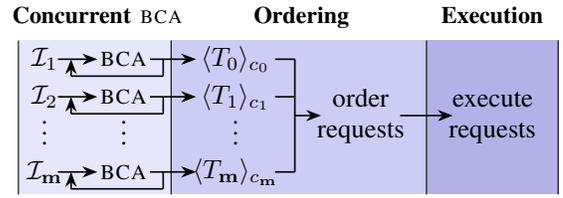

To maximize performance, we want every instance to propose distinct transactions, such that every round results in $\m$ distinct transactions. In Section~\ref{ss:clients}, we delve into the details by which primaries can choose transactions to propose.

To meet design goal~\ref{goal:always} and~\ref{goal:recover}, individual \BCA{} instances in \MultiBFT{} can continuously propose and replicate transactions: ordering and execution of the transactions replicated in a round by the $\m$ instances is done \emph{in parallel} to the proposal and replication of transactions for future rounds. Consequently, non-faulty primaries can utilize their entire outgoing network bandwidth for proposing transactions, even if other replicas or primaries are acting malicious.    

Let $\SignMessage{\Client_i}{\Transaction_i}$ be the transaction $\Transaction_i$ requested by $\Client_i$ and proposed by $\Primary{i}$ in round $\rn$. After all $\m$ instances complete round $\rn$, each replica can collect the set of transactions $S = \{ \SignMessage{\Client_i}{\Transaction_i} \mid 1 \leq i \leq \m \}$. By Assumption~\ref{ass:bca:non_diverge}, all non-faulty replicas will obtain the same set $S$. Next, all replicas choose an order on $S$ and execute all transactions in that order. For now, we assume that the transaction $\SignMessage{\Client_i}{\Transaction_i}$ is executed as the $i$-th transaction of round $\rn$. In Section~\ref{sec:improve}, we show that a more advanced ordering-scheme can further improve the resilience of consensus against malicious behavior. As a direct consequence of Assumption~\ref{ass:bca:good}, we have the following:

\begin{proposition}
Consider \MultiBFT{} running in a system with $\n$ replicas, $\n > 3\f$. If all $\m$ instances have non-faulty primaries and communication is reliable, then, in each round, all non-faulty replicas will accept the same set of $\m$ transactions and execute these transactions in the same order.
\end{proposition}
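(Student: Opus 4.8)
The plan is to reduce the claim to a per-instance argument and then aggregate over the $\m$ concurrent \BCA{} instances; the whole proof rests on pairing the liveness guarantee of Assumption~\ref{ass:bca:good} with the agreement guarantee of Assumption~\ref{ass:bca:non_diverge}, applied independently to each instance.

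First, I would fix an arbitrary round $\rn$ together with an arbitrary instance $\Instance{i}$, $1 \leq i \leq \m$. Its primary $\Primary{i}$ is non-faulty by hypothesis and communication is reliable, so Assumption~\ref{ass:bca:good} applies directly to $\Instance{i}$ and yields that \emph{every} non-faulty replica accepts some proposal in round $\rn$ of $\Instance{i}$. This rules out the possibility that a non-faulty replica is left without the transaction of instance $i$.

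Next, I would pin down \emph{which} proposal is accepted. By Assumption~\ref{ass:bca:non_diverge}, any two non-faulty replicas that accept a proposal in round $\rn$ of $\Instance{i}$ accept the same one. Combining this agreement with the liveness established in the previous step shows that every non-faulty replica accepts exactly the same transaction $\SignMessage{\Client_i}{\Transaction_i}$ in $\Instance{i}$, namely the request proposed by $\Primary{i}$. Quantifying over all $i \in \{1, \dots, \m\}$, every non-faulty replica collects precisely the set $S = \{ \SignMessage{\Client_i}{\Transaction_i} \mid 1 \leq i \leq \m \}$---one transaction per instance, neither more nor fewer---so all non-faulty replicas agree on $S$.

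Finally, for the execution order I would invoke the deterministic ordering rule fixed just above the statement, that the transaction of $\Instance{i}$ is executed as the $i$-th transaction of round $\rn$. Since every non-faulty replica holds the identical set $S$ with the identical instance-to-transaction association, applying this deterministic rule produces the same sequence at each non-faulty replica. I expect no genuine obstacle here: the only point requiring care is the interplay of quantifiers, since Assumption~\ref{ass:bca:non_diverge} guarantees agreement only \emph{conditioned} on acceptance. It is therefore essential to first invoke Assumption~\ref{ass:bca:good} to exclude the degenerate case in which some non-faulty replica accepts nothing, after which the two assumptions together force exact equality of the accepted sets and hence of the execution order.
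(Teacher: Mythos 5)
Your argument is correct and matches the paper's reasoning: the paper presents this proposition as a direct consequence of Assumption~A4 (with Assumption~A2 invoked in the preceding paragraph to establish agreement on the set $S$, and the fixed deterministic ordering giving identical execution order). You have simply spelled out the same per-instance liveness-plus-agreement argument in more detail, including the correct observation that A4 must be applied before A2 to rule out empty acceptance.
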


As all non-faulty replicas will execute each transaction in $\SignMessage{\Client_i}{\Transaction_i} \in S$, there are $\nf$ distinct non-faulty replicas that can inform the client of the outcome of execution. As all non-faulty replicas operate deterministically and execute the transactions in the same order, client $\Client_i$ will receive identical outcomes of $\nf > \f$ replicas, guaranteeing that this outcome is correct.

In the above, we described the normal-case operations of \MultiBFT{}. As in normal primary-backup protocols, individual instances in \MultiBFT{} can be subject to both \emph{detectable} and \emph{undetectable} failures. Next, we deal with these two types of failures.

\subsection{Dealing with Detectable Failures}\label{ss:detfaulty}

Consensus-based systems typically operate in an environment with asynchronous communication: messages can get lost, arrive with arbitrary delays, and in arbitrary order. Consequently, it is impossible to distinguish between, on the one hand, a primary that is malicious and does not send out proposals and, on the other hand, a primary that does send out proposals that get lost in the network. As such, asynchronous consensus protocols can only provide \emph{progress} in periods of \emph{reliable bounded-delay communication} during which all messages sent by non-faulty replicas will arrive at their destination within some maximum delay~\cite{flp,capproof}.

To be able to deal with failures, \MultiBFT{} assumes that \emph{any failure} of non-faulty replicas to receive proposals from a primary $\Primary{i}$, $1 \leq i \leq \m$, is due to \emph{failure} of $\Primary{i}$, and we design the recovery process such that it can also recover from failures due to unreliable communication. Furthermore, in accordance with the wait-free design goals~\ref{goal:always} and~\ref{goal:recover}, the recovery process will be designed so that it does not interfere with other \BCA{} instances or other recovery processes. Now assume that primary $\Primary{i}$ of $\Instance{i}$, $1 \leq i \leq \m$, fails in round $\rn$. The recovery process consists of three steps:
\begin{enumerate}
\item All non-faulty replicas need to detect failure of the $\Primary{i}$.
\item All non-faulty replicas need to reach agreement on the state of $\Instance{i}$: which transactions have been proposed by $\Primary{i}$ and have been accepted in the rounds up-to-$\rn$.
\item To deal with unreliable communication, all non-faulty replicas need to determine the round in which $\Primary{i}$ is allowed to resume its operations.
\end{enumerate}

To reach agreement on the state of $\Instance{i}$, we rely on a separate instance of the consensus protocol $\Protocol$ that is only used to coordinate agreement on the state of $\Instance{i}$ during failure. This coordinating consensus protocol $\Protocol$ replicates $\Stop{i}{E}$ operations, in which $E$ is a set of $\nf$ \Name{Failure} messages sent by $\nf$ distinct replicas from which all accepted proposals in instance $\Instance{i}$ can be derived. We notice that $\Protocol$ is---itself---an instance of a primary-backup protocol that is coordinated by some primary $\Leader{i}$ (based on the current view in which the instance of $\Protocol$ operates), and we use the standard machinery of $\Protocol$ to deal with failures of that leader (see Section~\ref{ss:back}). Next, we shall describe how the recovery process is initiated. The details of this protocol can be found in Figure~\ref{fig:rec_init}.

\begin{figure}[t!]
    \begin{myprotocol}
        \TITLE{Recovery request role}{used by replica $\Replica$}
        \EVENT{$\Replica$ detects failure of the primary $\Primary{i}$, $1 \leq i \leq \m$, in round $\rn$}\label{fig:rec_init:detect}
            \STATE $\Replica$ halts $\Instance{i}$.
            \STATE Let $P$ be the state of $\Replica$ in accordance to Assumption~\ref{ass:bca:recover}.
            \STATE Broadcast $\Message{Failure}{i, \rn, P}$ to all replicas.
        \ENDEVENT
        \EVENT{$\Replica$ receives $\f+1$ messages $m_j = \Message{Failure}{i, \rn_j, P_j}$ such that:
            \begin{algenumerate}
                \item these messages are sent by a set $S$ of $\abs{S} = \f+1$ distinct replicas;
                \item all $\f+1$ messages are well-formed; and
                \item $\rn_j$, $1 \leq j \leq \f+1$, comes after the round in which $\Instance{i}$ started last
            \end{algenumerate}}\label{fig:rec_init:join}
            \STATE $\Replica$ detects failure of $\Primary{i}$ (if not yet done so).
        \ENDEVENT
        \SPACE
        \TITLE{Recovery leader role}{used by leader $\Leader{i}$ of $\Protocol$}
        \EVENT{$\Leader{i}$ receives $\nf$ messages $m_j = \Message{Failure}{i, \rn_j, P_j}$ such that
            \begin{algenumerate}
                \item these messages are sent by a set $S$ of $\abs{S} = \f+1$ distinct replicas;
                \item all $\nf$ messages are well-formed; and
                \item $\rn_j$, $1 \leq j \leq \f+1$, comes after the round in which $\Instance{i}$ started last
            \end{algenumerate}}\label{fig:rec_init:stop}
            \STATE Propose $\Stop{i}{\{ m_1, \dots, m_{\nf} \}}$ via $\Protocol$.
        \ENDEVENT
        \SPACE
        \TITLE{State recovery role}{used by replica $\Replica$}
        \EVENT{$\Replica$ accepts $\Stop{i}{E}$ from $\Leader{i}$ via $\Protocol$}
            \STATE Recover the state of $\Instance{i}$ using $E$ in accordance to Assumption~\ref{ass:bca:recover}.
            \STATE Determine the last round $\rn$ for which $\Instance{i}$ accepted a proposal.
            \STATE Set $\rn + 2^f$, with $f$ the number of accepted $\Stop{i}{E'}$ operations, as the next valid round number for instance $\Instance{i}$.\label{fig:rec_init:restart}
        \ENDEVENT
    \end{myprotocol}
    \caption{The \emph{recovery algorithm} of \MultiBFT{}.}\label{fig:rec_init}
\end{figure}

When a replica $\Replica$ detects failure of instance $\Instance{i}$, $0 \leq i < \m$, in round $\rn$, it broadcasts a message $\Message{Failure}{i, \rn, P}$, in which $P$ is the state of $\Replica$ in accordance to Assumption~\ref{ass:bca:recover} (\lfref{fig:rec_init}{detect}). To deal with unreliable communication, $\Replica$ will continuously broadcast this \Name{Failure} message with an exponentially-growing delay until it learns on how to proceed with $\Instance{i}$. To reduce communication in the normal-case operations of $\Protocol$, one can send the full message $\Message{Failure}{i, \rn, P}$ to only $\Leader{i}$, while sending $\Message{Failure}{i, \rn}$ to all other replicas.

If a replica receives $\f+1$ \Name{Failure} messages from distinct replicas for a certain instance $\Instance{i}$, then it received at least one such message from a non-faulty replica. Hence, it can detect failure of $\Instance{i}$ (\lfref{fig:rec_init}{join}).  Finally, if a replica $\Replica$  receives $\nf$ \Name{Failure} messages from distinct replicas for a certain instance $\Instance{i}$, then we say there is a \emph{confirmed failure}, as $\Replica$ has the guarantee that eventually---within at most two message delays---also the primary $\Leader{i}$ of $\Protocol$ will receive $\nf$ \Name{Failure} messages (if communication is reliable). Hence, at this point, $\Replica$ sets a timer based on some internal timeout value (that estimates the message delay) and waits on the leader $\Leader{i}$ to propose a valid \StopOp{}-operation or for the timer to run out. In the latter case, replica $\Replica$ detects failure of the leader $\Leader{i}$ and follows the steps of a view-change in $\Protocol$ to (try to) replace $\Leader{i}$. When the leader $\Leader{i}$ receives $\nf$ \Name{Failure} messages, it can and must construct a valid \StopOp{}-operation and reach consensus on this operation (\lfref{fig:rec_init}{stop}). After reaching consensus, each replica can recover to a common state of $\Instance{i}$:

\begin{theorem}\label{thm:term}
Consider \MultiBFT{} running in a system with $\n$ replicas. If $\n > 3\f$, an instance $\Instance{i}$, $0 \leq i < \m$, has a confirmed failure, and the last proposal of $\Primary{i}$ accepted by a non-faulty replica was in round $\rn$, then---whenever communication becomes reliable---the recovery protocol of Figure~\ref{fig:rec_init} will assure that all non-faulty replicas will recover the same state, which will include all proposals accepted by non-faulty replicas before-or-at round $\rn$.
\end{theorem}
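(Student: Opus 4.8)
The plan is to decompose the claim into three standard pieces for view-change–style recovery: (i) \emph{progress}, that under eventual reliable communication some valid $\Stop{i}{E}$ operation is accepted by every non-faulty replica; (ii) \emph{agreement}, that all non-faulty replicas recover the \emph{same} state; and (iii) \emph{completeness}, that this common state contains every proposal accepted by a non-faulty replica in a round before-or-at $\rn$. Throughout I would lean on the quorum facts implied by $\n > 3\f$: any set of $\nf$ replicas contains at least $\nf - \f \geq \f+1$ non-faulty replicas, and any two sets of $\nf$ replicas intersect in at least $2\nf - \n \geq \f+1$ replicas, hence in at least one non-faulty replica.

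For progress, I would first argue that a confirmed failure forces \emph{all} non-faulty replicas to broadcast a \Name{Failure} message. By definition of confirmed failure, some non-faulty replica received $\nf$ \Name{Failure} messages from distinct replicas, of which at least $\nf - \f \geq \f+1$ come from non-faulty replicas that keep rebroadcasting with exponential backoff; once communication is reliable these reach every non-faulty replica, which then detects failure via the $\f+1$-message rule (\lfref{fig:rec_init}{join}) and, in turn, halts $\Instance{i}$ and broadcasts its own well-formed \Name{Failure} message (\lfref{fig:rec_init}{detect}). Thus all $\nf$ non-faulty replicas eventually broadcast. Next I would invoke the standard primary-backup machinery of the coordinating protocol \Protocol{}: if $\Leader{i}$ is faulty or the network unreliable, the view-change of \Protocol{} eventually installs a non-faulty leader during the reliable period. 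Such a leader receives the $\nf$ non-faulty \Name{Failure} messages, which satisfy the well-formedness and round conditions of \lfref{fig:rec_init}{stop}, so it constructs and proposes a valid $\Stop{i}{E}$; by Assumption~\ref{ass:bca:good} applied to \Protocol{}, this operation is then accepted by all non-faulty replicas.

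For agreement, Assumption~\ref{ass:bca:non_diverge} applied to \Protocol{} guarantees that all non-faulty replicas accepting a \StopOp{}-operation accept the \emph{same} $\Stop{i}{E}$ and agree on the full ordered sequence of such operations. Since the state-recovery role of Figure~\ref{fig:rec_init} is a deterministic function of $E$ and of the count $f$ of previously accepted $\Stop{i}{E'}$ operations, identical inputs yield identical outputs, so all non-faulty replicas recover the same state and set the same next valid round $\rn + 2^{f}$ (\lfref{fig:rec_init}{restart}); this settles (ii). For completeness, I would fix any transaction $\Transaction$ accepted by a non-faulty replica in a round $\rn' \leq \rn$ and show it is recoverable from $E$. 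The set $E$ carries the states $P_j$ of $\nf$ distinct replicas, at least $\nf - \f$ of which are non-faulty; by Assumption~\ref{ass:bca:recover}, $\Transaction$ can be recovered from any $\nf - \f$ non-faulty states, in particular from the $\geq \nf - \f$ non-faulty states inside $E$. Hence the recovery step reconstructs $\Transaction$, and by Assumption~\ref{ass:bca:non_diverge} no conflicting transaction is reconstructed for the same round, so the common recovered state is exactly consistent with what non-faulty replicas accepted up to round $\rn$.

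The hard part will be the completeness argument, and specifically bridging the gap between Assumption~\ref{ass:bca:recover}, which speaks about the states of non-faulty replicas, and the \emph{halting-time} states $P_j$ actually carried in the \Name{Failure} messages of $E$. I would have to argue that a non-faulty replica only halts $\Instance{i}$ after completing the rounds up to the one in which it detects failure, so that its recorded $P_j$ is fresh enough to reflect round $\rn'$; combined with the fact that acceptance of $\Transaction$ in round $\rn' \leq \rn$ forces at least $\nf - \f$ non-faulty replicas to hold recoverable state for $\Transaction$ (Assumption~\ref{ass:bca:succeed}), and that any $\nf - \f$ non-faulty senders of $E$ must overlap these by the quorum-intersection bound $2\nf - \n \geq \f + 1$, this guarantees at least one non-faulty state in $E$ carries $\Transaction$. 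Making this overlap-plus-freshness argument precise, while using the round condition of \lfref{fig:rec_init}{stop} to discard stale messages from before $\Instance{i}$ last (re)started, is the delicate technical core of the proof.
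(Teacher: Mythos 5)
Your proposal is correct and follows essentially the same route as the paper's proof: confirmed failure plus reliable communication forces all non-faulty replicas to broadcast \Name{Failure} messages, the coordinating protocol $\Protocol$ is then forced to reach consensus on a well-formed $\Stop{i}{E}$ (yielding agreement on the recovered state), and completeness follows because $\abs{E} = \nf$ guarantees $E$ contains the states of at least $\nf - \f$ non-faulty replicas, from which any accepted transaction is recoverable by Assumption~\ref{ass:bca:recover}. The ``freshness'' worry you isolate as the delicate core is simply absorbed by the paper into the statement of Assumption~\ref{ass:bca:recover} itself, so no quorum-intersection argument is needed there.
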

\begin{proof}
If communication is reliable and instance $\Instance{i}$ has a confirmed failure, then all non-faulty replicas will detect this failure and send \Name{Failure} messages (\lfref{fig:rec_init}{detect}). Hence, all replicas are guaranteed to receive at least $\nf$ \Name{Failure} messages, and any replica will be able to construct a well-formed operation $\Stop{i}{E}$. Hence, \Protocol{} will eventually be forced to reach consensus on $\Stop{i}{E}$. Consequently, all non-faulty replicas will conclude on the same state for instance $\Instance{i}$. Now consider a transaction $\Transaction$ accepted by non-faulty replica $\Replica[q]$ in instance $\Instance{i}$. Due to Assumption~\ref{ass:bca:recover}, $\Replica[q]$ will only accept $\Transaction$ if $\Transaction$ can be recovered from the state of any set of $\nf-\f$ non-faulty replicas. As $\abs{E} = \nf$ (\lfref{fig:rec_init}{stop}), the set $E$ contains the state of $\nf-\f$ non-faulty replicas. Hence, $\Transaction$ must be recoverable from $E$.
\end{proof}

We notice that the recovery algorithm of \MultiBFT{}, as outlined in Figure~\ref{fig:rec_init}, only affects the capabilities of the \BCA{} instance that is stopped. All other \BCA{} instances can concurrently propose transactions for current and for future rounds. Hence, the recovery algorithm adheres to the wait-free design goals~\ref{goal:always} and~\ref{goal:recover}. Furthermore, we reiterate that we have separate instance of the coordinating consensus protocol for each instance $\Instance{i}$, $1\leq i \leq \m$. Hence, recovery of several instances can happen concurrently, which minimizes the time it takes to recover from several simultaneous primary failures and, consequently, minimizes the delay before a round can be executed during primary failures. 

Confirmed failures not only happen due to malicious behavior. Instances can also fail due to periods of unreliable communication. To deal with this, we eventually restart any stopped instances. To prevent instances coordinated by malicious replicas to continuously cause recovery of their instances, every failure will incur an exponentially growing restart penalty (\lfref{fig:rec_init}{restart}). The exact round in which an instance can resume operations can be determined deterministically from the accepted history of \StopOp{}-requests. When all instances have round failures due to unreliable communication (which can be detected from the history of \StopOp{}-requests), any instance is allowed to resume operations in the earliest available round (after which all other instances are also required to resume operations).

\subsection{Dealing with Undetectable Failures}\label{ss:udetfaulty}

As stated in Assumption~\ref{ass:bca:succeed}, a malicious primary $\Primary{i}$ of a \BCA{} instance $\Instance{i}$ is able to keep up to $\f$ non-faulty replicas in the dark without being detected. In normal primary-backup protocols, this is not a huge issue: at least $\nf - \f > \f$ non-faulty replicas still accept transactions, and these replicas can execute and reliably inform the client of the outcome of execution. This is not the case in \MultiBFT{}, however:

\begin{example}\label{ex:collusion}
Consider a system with $\n = 3\f+1 = 7$ replicas. Assume that primaries $\Primary{1}$ and $\Primary{2}$ are malicious, while all other primaries are non-faulty. We partition the non-faulty replicas into three sets $A_1$, $A_2$, and $B$ with $\abs{A_1} = \abs{A_2} = \f$ and $\abs{B} = 1$. In round $\rn$, the malicious primary $\Primary{i}$, $i \in \{1,2\}$, proposes transaction $\SignMessage{\Client_i}{\Transaction_i}$ to only the non-faulty replicas in $A_i \union B$. This situation is sketched in Figure~\ref{fig:bi_attack}. After all concurrent instances of \BCA{} finish round $\rn$, we see that the replicas in $A_1$ have accepted $\SignMessage{\Client_1}{\Transaction_1}$, the replicas in $A_2$ have accepted $\SignMessage{\Client_2}{\Transaction_2}$, and only the replica in $B$ has accepted both $\SignMessage{\Client_1}{\Transaction_1}$ and $\SignMessage{\Client_2}{\Transaction_2}$. Hence, only the single replica in $B$ can proceed with execution of round $\rn$. Notice that, due to Assumption~\ref{ass:bca:succeed}, we consider all instances as finished successfully. If $\n \geq 10$ and $\f \geq 3$, this example attack can be generalized such that also the replica in $B$ is missing at least a single client transaction.
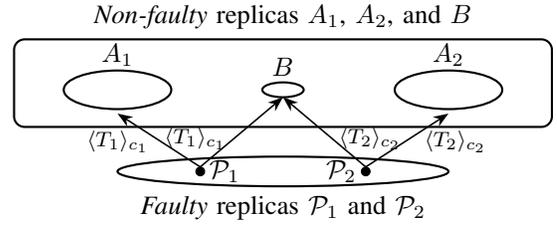
\begin{figure}[t!]
    \centering
    \begin{tikzpicture}[yscale=0.395,xscale=1.1]
        \node[above,align=right] at (0, 1.75) {\emph{Non-faulty} replicas $A_1$, $A_2$, and $B$};
        \draw[thick,rounded corners] (-3.25, -1.25) rectangle (3.25, 1.7);
        \draw[thick]  (0, 0) circle (0.25);
        \node[above=-1pt] (n1) at (0, 0.25) {$B$};
        
        \draw[thick]  (-2, 0) circle (0.65);
        \node[above=-0.25pt] (n2) at (-2, 0.5) {$A_1$};
        
        \draw[thick]  (2, 0) circle (0.65);
        \node[above=-0.25pt] (n3) at (2, 0.5) {$A_2$};
        
        \draw[thick] (0, -2.75) ellipse (2 and 0.5);
        \node[below,align=right] (f) at (0, -3.25) {\emph{Faulty} replicas $\Primary{1}$ and $\Primary{2}$};
        
        \node[dot] (p1) at (-1, -2.75) {};
        \node[dot] (p2) at ( 1, -2.75) {};
        \node[right] at (p1) {$\Primary{1}$};
        \node[left] at (p2) {$\Primary{2}$};
        
        \draw[semithick] (p1.north) edge[->] node[left,label] {$\SignMessage{\Client_1}{\Transaction_1}$} (-2, -0.8) edge[->] node[left,label,yshift=-2pt,xshift=-2pt] {$\SignMessage{\Client_1}{\Transaction_1}$} (0, -0.25)
                         (p2.north) edge[->] node[right=3pt,label] {$\SignMessage{\Client_2}{\Transaction_2}$} ( 2, -0.8) edge[->] node[right,label,yshift=-2pt,xshift=2pt] {$\SignMessage{\Client_2}{\Transaction_2}$} (0, -0.25);
    \end{tikzpicture}
    \caption{An attack possible when parallelizing \BCA{}: malicious primaries can prevent non-faulty replicas from learning all client requests in a round, thereby preventing timely round execution. The faulty primary $\Primary{i}$, $i  \in \{1,2\}$, does so by only letting non-faulty replicas $A_i \union B$ participate in instance $\Instance{i}$.}\label{fig:bi_attack}
\end{figure}
\end{example}

To deal with \emph{in-the-dark} attacks of Example~\ref{ex:collusion}, we can run a standard checkpoint algorithm for each \BCA{} instance: if the system does not reach confirmed failure of $\Primary{i}$ in round $\rn$, $1\leq i \leq \m$, then, by Assumption~\ref{ass:bca:succeed} and~\ref{ass:bca:non_diverge}, at-least-$\nf - \f$ non-faulty replicas have accepted the same transaction $\Transaction$ in round $\rn$ of $\Instance{i}$. Hence, by Assumption~\ref{ass:bca:recover}, a standard checkpoint algorithm (e.g., the one of \PBFT{} or one based on delayed replication~\cite{delay_rep}) that exchanges the state of these at-least-$\nf - \f$ non-faulty replicas among all other replicas is sufficient to assure that all non-faulty replicas eventually accept $\Transaction$. We notice that these checkpoint algorithms can be run concurrently with the operations of \BCA{} instances, thereby adhering to our wait-free design goals~\ref{goal:always} and~\ref{goal:recover}.

To reduce the cost of checkpoints, typical consensus systems only perform checkpoints after every $x$-th round for some system-defined constant $x$. Due to in-the-dark attacks, applying such a strategy to \MultiBFT{} means choosing between execution latency and throughput. Consequently, in \MultiBFT{} we do checkpoints on a dynamic \emph{per-need basis}: when replica $\Replica$ receives $\nf - \f$ claims of failure of primaries (via the \Name{Failure} messages of the recovery protocol) in round $\rn$ and $\Replica$ itself finished round $\rn$ for all its instances, then it will participate in any attempt for a checkpoint for round $\rn$. Hence, if an in-the-dark attack affects more than $\f$ distinct non-faulty replicas in round $\rn$, then a successful checkpoint will be made and all non-faulty replicas recover from the attack, accept all transactions in round $\rn$, and execute all these transactions.

Using Theorem~\ref{thm:term} to deal with detectable failures and using checkpoint protocols to deal with replicas in-the-dark, we conclude that \MultiBFT{} adheres to design goal~\ref{goal:consensus}:

\begin{theorem}
Consider \MultiBFT{} running in a system with $\n$ replicas. If $\n > 3\f$, then \MultiBFT{} provides consensus in periods in which communication is reliable.
\end{theorem}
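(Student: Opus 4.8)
The plan is to establish the two standard components of consensus --- \emph{agreement} (all non-faulty replicas execute the same transactions in the same order) and \emph{progress} (every round is eventually executed once communication is reliable) --- by gluing together the earlier Proposition on normal-case operation, Theorem~\ref{thm:term}, and the checkpoint mechanism of Section~\ref{ss:udetfaulty}. The argument proceeds round-by-round and, within a round, instance-by-instance, distinguishing the three possible fates of an instance $\Instance{i}$: a non-faulty primary, a detectable (confirmed) failure, and an undetectable in-the-dark failure.

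For \emph{agreement}, I would fix a round $\rn$ and argue that all non-faulty replicas agree on the per-instance outcome. For each instance $\Instance{i}$ that is not stopped, Assumption~\ref{ass:bca:non_diverge} guarantees that any two non-faulty replicas that accept a transaction in round $\rn$ accept the \emph{same} transaction, so the accepted value is unique. For each instance that is stopped, Theorem~\ref{thm:term} guarantees that all non-faulty replicas recover an identical state that includes every proposal accepted before-or-at the failure round; moreover, consensus on the corresponding $\Stop{i}{E}$ operation via $\Protocol{}$ makes the stopped/not-stopped status itself agreed upon. Hence, for every instance, all non-faulty replicas agree on a single outcome for round $\rn$, and the deterministic ordering scheme then forces them to execute exactly the same ordered sequence. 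A straightforward induction on the round number extends this to the whole execution, giving agreement.

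For \emph{progress} under reliable communication, I would show that each round eventually completes at every non-faulty replica by resolving each instance. Instances with non-faulty primaries always accept a proposal by Assumption~\ref{ass:bca:good}, and by the wait-free design (goals~\ref{goal:always} and~\ref{goal:recover}) this happens independently of the other instances. An instance whose faulty primary triggers a confirmed failure is handled by Theorem~\ref{thm:term}: once communication is reliable, all non-faulty replicas send \Name{Failure} messages, $\Leader{i}$ is forced to reach consensus on a well-formed $\Stop{i}{E}$, and the round state is thereby pinned down (and later restarted after its deterministic exponential-backoff penalty). The remaining case is the in-the-dark attack of Example~\ref{ex:collusion}, which produces no confirmed failure: here at least $\nf-\f > \f$ non-faulty replicas accept the transaction by Assumptions~\ref{ass:bca:succeed} and~\ref{ass:bca:non_diverge}, and the per-need checkpoint of Section~\ref{ss:udetfaulty} --- triggered precisely when a replica collects $\nf-\f$ failure claims for round $\rn$ --- uses Assumption~\ref{ass:bca:recover} to propagate the transaction to every non-faulty replica. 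Once each instance has either delivered its transaction (directly or via checkpoint) to all non-faulty replicas or been stopped through $\Protocol{}$, every non-faulty replica holds the full round and can execute it.

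The main obstacle I anticipate is the liveness argument, specifically reconciling the two distinct recovery paths so that no round can stall forever. The subtlety is that an in-the-dark attack affecting at most $\f$ non-faulty replicas never reaches a confirmed failure, so it cannot rely on Theorem~\ref{thm:term}; I must argue that the checkpoint trigger threshold of $\nf-\f$ failure claims is exactly strong enough to guarantee recovery whenever such an attack would otherwise prevent a replica from completing a round, while not firing needlessly. Dually, I must confirm that under reliable communication the exponential-backoff restart schedule eventually yields a stable configuration in which instances with non-faulty primaries run uninterrupted, so that progress is sustained rather than merely punctuated. Tying these together --- showing that for every round, every instance is ultimately resolved to an agreed outcome at all non-faulty replicas --- is the crux of the proof.
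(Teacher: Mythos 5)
Your proposal is correct and follows essentially the same route as the paper, which states this theorem without a detailed proof and justifies it exactly by the decomposition you use: the Proposition for instances with non-faulty primaries, Theorem~\ref{thm:term} for confirmed failures, and the per-need checkpoint mechanism of Section~\ref{ss:udetfaulty} for in-the-dark replicas, with Assumption~\ref{ass:bca:non_diverge} and the deterministic ordering yielding agreement. The subtleties you flag (the checkpoint trigger threshold and the restart schedule) are real but are resolved by the design choices already described in the text, so no new idea is needed.
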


\subsection{Client Interactions with \MultiBFT{}}\label{ss:clients}

To maximize performance, it is important that every instance proposes distinct client transactions, as proposing the same client transaction several times would reduce throughput. We have designed \MultiBFT{} with faulty clients in mind, hence, we do not expect cooperation of clients to assure that they send their transactions to only a single primary.

To be able to do so, the design of \MultiBFT{} is optimized for the case in which there are always many more concurrent clients than replicas in the system. In this setting, we assign every client $\Client$ to a single primary $\Primary{i}$, $1 \leq i \leq \m = \n$, such that only instance $\Instance{i}$ can propose client requests of $\Client$. For this design to work in \emph{all cases}, we need to solve two issues, however: we need to deal with situations in which primaries do not receive client requests (e.g., during downtime periods in which only few transactions are requested), and we need to deal with faulty primaries that refuse to propose requests of some clients.

First, if there are less concurrent clients than replicas in the system, e.g., when demand for services is low, then \MultiBFT{} still needs to process client transactions correctly, but it can do so without optimally utilizing resources available, as this would not impact throughput in this case due to the low demands. If a primary $\Primary{i}$, $1 \leq i \leq \m$, does not have transactions to propose in any round $\rn$ and $\Primary{i}$ detects that other \BCA{} instances are proposing for round $\rn$ (e.g., as it receives proposals), then $\Primary{i}$ proposes a small no-op-request instead. 

Second, to deal with a primary $\Primary{i}$, $1 \leq i \leq \m$, that refuses to propose requests of some clients, we take a two-step approach. First, we incentivize malicious primaries to \emph{not refuse} services, as otherwise they will be detected faulty and loose the ability to propose transactions altogether. To detect failure of $\Primary{i}$, \MultiBFT{} uses standard techniques to enable a client $\Client$ to \emph{force} execution of a transaction $\Transaction$. First, $\Client$ broadcasts $\SignMessage{\Client}{\Transaction}$ to all replicas. Each non-faulty replica $\Replica$ will then forward $\SignMessage{\Client}{\Transaction}$ to the appropriate primary $\Primary{i}$, $1 \leq i \leq \m$. Next, if the primary $\Primary{i}$ does not propose any transaction requested by $\Client$ within a reasonable amount of time, then $\Replica$ detects failure of $\Primary{i}$. Hence, refusal of $\Primary{i}$ to propose $\SignMessage{\Client}{\Transaction}$ will lead to primary failure, incentivizing malicious primaries to provide service.

Finally, we need to deal with primaries that are unwilling or incapable of proposing requests of $\Client$, e.g., when the primary crashes. To do so, $\Client$ can request to be reassigned to another instance $\Instance{j}$, $1 \leq j \leq \m$, by broadcasting a request $m \GETS \Message{SwitchInstance}{\Client, j}$ to all replicas. Reassignment is handled by the coordinating consensus protocol $\Protocol$ for $\Instance{i}$, that will reach consensus on $m$. Malicious clients can try to use reassignment to propose transactions in several instances at the same time. To deal with this, we assume that no instance is more than $\sigma$ rounds behind any other instance (see Section~\ref{sec:improve}). Now, consider the moment at which replica $\Replica$ accepts $m$ and let $\rn(m, \Replica)$ be the maximum round in which any request has been proposed by any instance in which $\Replica$ participates. The primary $\Primary{i}$ will stop proposing transactions of $\Client$ immediately. Any non-faulty replica $\Replica$ will stop accepting transactions of $\Client$ by $\Instance{i}$ after round $\rn(m, \Replica) + \sigma$ and will start accepting transactions of $\Client$ by $\Instance{j}$ after round $\rn(m, \Replica) + 2\sigma$. Finally, $\Primary{j}$ will start proposing transactions of $\Client$ in round $\rn(m, \Primary{j}) + 3\sigma$.
%Next, we sketch the correctness of this mechanism:
%
%\begin{proof}
%Assume that all instances are proposing rounds within some window $[r, r + \sigma]$). First, notice that $\rho(m, \Replica) \in [r, r + \sigma]$ for all replicas. Now consider replica $\Replica_1$ with $\rho(m, \Replica_1) = r$. This replica will be the first replica to stop accepting requests of $\Client$ by $\Instance{i}$, and will do so after round $r + \sigma$. As $\Instance{i}$ is still proposing for a round in $[r, r + \sigma]$, it might---in the worst case---propose a request of $\Client$ in round $r+\sigma$, which will be accepted by all replicas. Now consider replica $\Replica_2$ with $\rho(m, \Replica_2) = r + \sigma$. This replica will be the last replica to start accepting requests from $\Client$ in instance $\Instance{j}$, and will do so after round $r + \sigma + 2\sigma$.  As $\Primary{j}$ will only start proposing requests of $\Client$ in round $\rho(m, \Primary{j}) + 3\sigma \geq r + 3\sigma$, all replicas will accept such proposals.
%\end{proof}

\section{\MultiBFT{}: Improving Resilience of Consensus}\label{sec:improve}

Traditional primary-backup consensus protocols rely heavily on the operations of their primary. Although these protocols are designed to deal with primaries that \emph{completely fail} proposing client transactions, they are not designed to deal with many other types of malicious behavior. 
\begin{example}
Consider a financial service running on a traditional \PBFT{} consensus-based system. In this setting, a malicious primary can affect operations in two malicious ways:
\begin{enumerate}
\item \emph{Ordering attack}. The primary sets the order in which transactions are processed and, hence, can choose an ordering that best fits its own interests. To illustrate this, we consider client transactions of the form:
\begin{multline*}
    \operatorname{transfer}(A, B, n, m) \GETS\texttt{if $\operatorname{amount}(A) > n$ then}\\
                                             \texttt{$\operatorname{withdraw}(A, m)$; $\operatorname{deposit}(B, m)$.}
\end{multline*}
Let $\Transaction_1 = \operatorname{transfer}(\text{Alice}, \text{Bob}, 500, 200)$ and $\Transaction_2 = \operatorname{transfer}(\text{Bob}, \text{Eve}, 400, 300)$. Before processing these transaction, the balance for Alice is $800$, for Bob $300$, and for Eve $100$. In Figure~\ref{fig:tbl_spendings}, we summarize the results of either first executing $\Transaction_1$ or first executing $\Transaction_2$. As is clear from the figure, execution of $\Transaction_1$  influences the outcome of execution of $\Transaction_2$. As primaries choose the ordering of transactions, a malicious primary can chose an ordering whose outcome benefits its own interests, e.g., formulate targeted attacks to affect the execution of the transaction of some clients.
\begin{figure}[t]
    \centering
    \begin{tabular}{ll||lr||lr}
    \hline
    &Original&\multicolumn{2}{c||}{First $\Transaction_1$, then $\Transaction_2$}&\multicolumn{2}{c}{First $\Transaction_2$, then $\Transaction_1$}\\
        &Balance&$\Transaction_1$ &$\Transaction_2$ &$\Transaction_2$ & $\Transaction_1$\\
    \hline
    \hline
    Alice& 800&600&600&800&600\\
    Bob& 300&500&200&300&500\\
    Eve& 100&100&400&100&100\\
    \hline
    \end{tabular}
    \caption{Illustration of the influence of execution order on the outcome: switching around requests affects the transfer of $\Transaction_2$.}\label{fig:tbl_spendings}
\end{figure}
\item \emph{Throttling attack}. The primary sets the pace at which the system processes transactions. We recall that individual replicas rely on \emph{time-outs} to detect malicious behavior of the primary. This approach will fail to detect or deal with primaries that \emph{throttle throughput} by proposing transactions as slow as possible, while preventing failure detection due to time-outs.
\setcounter{mycounterf}{\value{enumi}}
\end{enumerate}
Besides malicious primaries, also other malicious entities can take advantage of a primary-backup consensus protocol:
\begin{enumerate}
\setcounter{enumi}{\value{mycounterf}}
\item \emph{Targeted attack}. As the throughput of a primary-backup system is entirely determined by the primary, attackers can send arbitrary messages to the primary. Even if the primary recognizes that these messages are irrelevant for its operations, it has spend resources (network bandwidth, computational power, and memory) to do so, thereby reducing throughput. Notice that---in the worst case---this can even lead to failure of a non-faulty primary to propose transactions in a timely manner.
\end{enumerate}
\end{example}

Where traditional consensus-based systems fail to deal with these attacks, the concurrent design of \MultiBFT{} can be used to mitigate these attacks.

First, we look at \emph{ordering attacks}. To mitigate this type of attack, we propose a method to deterministically select a different permutation of the order of execution in every round in such a way that this ordering is practically impossible to predict or influence by faulty replicas. Note that for any sequence $S$ of $k = \abs{S}$ values, there exist $k!$ distinct permutations. We write $P(S)$ to denote these permutations of $S$. To deterministically select one of these permutations, we construct a function that maps an integer $h \in \{0, \dots, k!-1\}$ to a unique permutation in $P(S)$. Then we discuss how replicas will uniformly pick $h$.  As $\abs{P(S)} = k!$, we can construct the following bijection $f_S : \{ 0, \dots, k!-1 \} \rightarrow P(S)$
\[
f_S(i) = 
\begin{cases}
    S &\text{if $\abs{S} = 1$};\\
   f_{S \difference S[q]}(r) \Concat S[q]   &\text{if $\abs{S} > 1$},
\end{cases}
\]
in which $q = i \div {(\abs{S}-1)!}$ is the quotient and $r = i \bmod {(\abs{S}-1)!}$ is the remainder of integer division by $(\abs{S}-1)!$. Using induction on the size of $S$, we can prove:
\begin{lemma}\label{lem:bijection}
$f_S$ is a bijection from $\{ 0, \dots, \abs{S}! - 1 \}$ to all possible permutations of $S$.
\end{lemma}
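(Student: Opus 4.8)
The plan is to prove the lemma by induction on $k = \abs{S}$, following the recursive structure of $f_S$ itself. For the base case $k = 1$, the domain $\{0, \dots, 1! - 1\} = \{0\}$ is a singleton, and $f_S(0) = S$ is the unique permutation of a one-element sequence, so $f_S$ is trivially a bijection.

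For the inductive step I would assume that for every sequence $S'$ with $\abs{S'} = k - 1$ the map $f_{S'}$ is a bijection from $\{0, \dots, (k-1)! - 1\}$ onto $P(S')$, and fix $S$ with $\abs{S} = k$. The first thing to establish is that the defining decomposition is sound: for each $i \in \{0, \dots, k! - 1\}$ the division algorithm yields a unique pair $(q, r)$ with $i = q(k-1)! + r$ and $0 \le r < (k-1)!$, and since $i < k! = k(k-1)!$ we get $0 \le q < k$; these are exactly $q = i \div (k-1)!$ and $r = i \bmod (k-1)!$. Hence $S[q]$ is well-defined, $S \difference S[q]$ is a sequence of length $k-1$ (with its own reindexing, which is internally consistent), and $r$ lies in the domain of $f_{S \difference S[q]}$. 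Therefore $f_S(i)$ is well-defined, and since $f_{S \difference S[q]}(r)$ is a permutation of the remaining $k-1$ entries while we append the removed entry $S[q]$, the output $f_S(i)$ is a permutation of $S$.

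Next I would prove injectivity. The key observation is that the last element of $f_S(i) = f_{S \difference S[q]}(r) \Concat S[q]$ is exactly $S[q]$, so one can read off $q$ from the output, using that the entries of $S$ are distinct so that $S[q]$ uniquely identifies the index $q$. If $f_S(i_1) = f_S(i_2)$, matching last elements forces $q_1 = q_2 = q$; the prefixes then satisfy $f_{S \difference S[q]}(r_1) = f_{S \difference S[q]}(r_2)$, and the inductive hypothesis gives $r_1 = r_2$, whence $i_1 = q(k-1)! + r_1 = i_2$. Surjectivity is then immediate, since $f_S$ is an injection between finite sets of equal size $\abs{\{0,\dots,k!-1\}} = k! = \abs{P(S)}$; alternatively, one can run the same read-off-$q$-then-recover-$r$ argument on an arbitrary target permutation to exhibit its preimage explicitly.

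The step I expect to be the main obstacle — really the only subtle point — is the recovery of $q$ from the last element in the injectivity argument, since this silently relies on the entries of $S$ being distinct, so that $S[q]$ determines $q$. This is consistent with the stated premise that $S$ has exactly $k!$ distinct permutations; were entries allowed to repeat, $f_S$ would cease to be injective and $P(S)$ would have fewer than $k!$ members. I would therefore make the distinctness of the entries of $S$ explicit at the outset. The remaining bookkeeping — the bound $0 \le q < k$ and the uniqueness of the quotient–remainder split — is routine.
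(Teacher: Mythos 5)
Your proof is correct and follows exactly the route the paper indicates: induction on $\abs{S}$, using the quotient--remainder decomposition of $i$ by $(\abs{S}-1)!$ to peel off the last element and reduce to the inductive hypothesis. The paper gives no further detail than ``induction on the size of $S$,'' so your filled-in argument --- including the observation that injectivity relies on the entries of $S$ being distinct, which holds here since $S$ collects one transaction per instance --- is a faithful and complete elaboration of the intended proof.
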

Let $S$ be the sequence of all transactions accepted in round $\rn$, ordered on increasing instance. The replicas uniformly pick $h = \Hash{S} \bmod (k! - 1)$, in which $\Hash{S}$ is a \emph{strong cryptographic hash function} that maps an arbitrary value $v$ to a numeric digest value in a bounded range such that it is practically impossible to find another value $S'$, $S \neq S'$, with $\Hash{S} = \Hash{S'}$. When at least one primary is non-malicious ($\m > \f$), the final value $h$ is only known after completion of round $\rn$ and it is practically impossible to predictably influence this value. After selecting $h$, all replicas execute the transactions in $S$ in the order given by $f_{S}(h)$.

To deal with primaries that throttle their instances, non-faulty replicas will detect failure of those instances that lag behind other instances. In specific, if an instance $\Instance{i}$, $1 \leq i \leq \m$, is $\sigma$ rounds behind any other instances (for some system-dependent constant $\sigma$), then $\Replica$ detects failure of $\Primary{i}$. 

Finally, we notice that concurrent consensus and \MultiBFT{}---by design---provides \emph{load balancing} with respect to the tasks of the primary, this by spreading the total workload of the system over many primaries. As such, \MultiBFT{} not only improves performance when bounded by the primary bandwidth, but also when performance is bounded by computational power (e.g., due to costly cryptographic primitives), or by message delays. Furthermore, this \emph{load balancing} reduces the load on any single primary to propose and process a given amount of transactions, dampening the effects of any targeted attacks against the resources of a single primary.

\section{Evaluation of the Performance of \MultiBFT{}}\label{sec:eval}
In the previous sections, we proposed concurrent consensus and presented the design of \MultiBFT{}, our concurrent consensus paradigm. To show that concurrent consensus not only provides benefits in theory, we study the performance of \MultiBFT{} and the  effects of concurrent consensus in a practical setting. To do so, we measure the performance of \MultiBFT{} in \RDB{}---our high-performance resilient blockchain fabric---and compare \MultiBFT{} with the well-known primary-backup consensus protocols \PBFT{}, \ZZ{}, \SBFT{}, and \HS{}. With this study, we aim to answer the following questions:
\begin{enumerate}
\renewcommand{\theenumi}{Q\arabic{enumi}}
\item\label{q:1} What is the performance of \MultiBFT{}: does \MultiBFT{} deliver on the promises of concurrent consensus and provide more throughput than any primary-backup consensus protocol can provide?
\item\label{q:2} What is the scalability of \MultiBFT{}: does \MultiBFT{} deliver on the promises of concurrent consensus and provide better scalability than primary-backup consensus protocols?
\item\label{q:3} Does \MultiBFT{} provide sufficient load balancing of primary tasks to improve performance of consensus by offsetting any high costs incurred by the primary? 
\item\label{q:4} How does \MultiBFT{} fare under failures?
\item\label{q:5} What is the impact of batching client transactions on the performance of \MultiBFT{}?
\end{enumerate}
First, in Section~\ref{ss:es}, we describe the experimental setup. Then, in Section~\ref{ss:rdb}, we provide a high-level overview of \RDB{} and of its general performance characteristics. Next, in Section~\ref{ss:cons}, we provide details on the consensus protocols we use in this evaluation. Then, in Section~\ref{ss:meas}, we present the experiments we performed and the measurements obtained. Finally, in Section~\ref{ss:discuss}, we interpret these measurements and answer the above research questions.

\subsection{Experimental Setup}\label{ss:es}
To be able to study the practical performance of \MultiBFT{} and other consensus protocols, we choose to study these protocols in a full resilient database system. To do so, we implemented \MultiBFT{} in \RDB{}. To generate a workload for the protocols, we used the \emph{Yahoo Cloud Serving Benchmark}~\cite{ycsb} provided by the Blockbench macro benchmarks~\cite{blockbench}. In the generated workload, each client transaction queries a YCSB table with half a million active records and $90\%$ of the transactions write and modify records. Prior to the experiments, each replica is initialized with an identical copy of the YCSB table. We perform all experiments in the Google Cloud. In specific, each replica is deployed on a \texttt{c2}-machine with a $16$-core Intel Xeon Cascade Lake CPU running at $\SI{3.8}{\giga\hertz}$ and with $\SI{32}{\giga\byte}$ memory. We use up to $\SI{320}{\kilo{}}$ clients, deployed on $16$ machines.

\subsection{The \RDB{} Blochchain Fabric}\label{ss:rdb}
The \RDB{} fabric incorporates secure permissioned blockchain technologies to provide resilient data processing. A detailed description of how \RDB{} achieves high-throughput consensus in a practical settings can be found in Gupta et al.~\cite{icdcs,resilientdb-demo,tut-middleware19,tut-debs20,tut-vldb20}. The architecture of \RDB{} is optimized for maximizing throughput via \emph{multi-threading} and \emph{pipelining}. To further maximize throughput and minimize the overhead of any consensus protocol, \RDB{} has built-in support for \emph{batching} of client transactions.

We typically group $\SI{100}{\txn\per\batch}$. In this case, the size of a proposal is $\SI{5400}{\byte}$ and of a client reply (for 100 transactions) is $\SI{1748}{\byte}$. The other messages exchanged between replicas during the Byzantine commit algorithm have a size of $\SI{250}{\byte}$. \RDB{} supports \emph{out-of-order processing} of transactions in which primaries can propose future transactions before current transactions are executed. This allows \RDB{} to maximize throughput of any primary-backup protocol that supports out-of-order processing (e.g., \PBFT{}, \ZZ{}, and \SBFT{}) by maximizing bandwidth utilization at the primary.

In \RDB{}, each replica maintains a \emph{blockchain ledger} (a journal) that holds an ordered copy of all executed transactions. The ledger not only stores all transactions, but also proofs of their acceptance by a consensus protocols. As these proofs are built using strong cryptographic primitives, the ledger is \emph{immutable} and, hence, can be used to provide \emph{strong data provenance}. 

In our experiments replicas not only perform consensus, but also communicate with clients and execute transactions. In this practical setting, performance is not fully determined by bandwidth usage due to consensus (as outlined in Section~\ref{ss:limit}), but also by the cost of \emph{communicating with clients}, of sequential \emph{execution} of all transactions, of \emph{cryptography}, and of other steps involved in processing messages and transactions, and by the available memory limitations. To illustrate this, we have measured the effects of \emph{client communication}, \emph{execution}, and \emph{cryptography} on our deployment of \RDB{}.

\begin{figure}
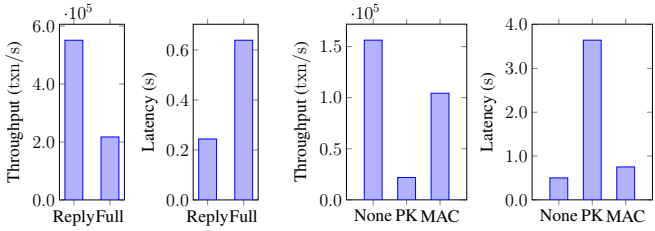

\centering
\setlength{\tabcolsep}{1pt}
\begin{tabular}{c@{\quad}c}
\systemgraph{\dataTPUTfnEXEC}{\axistput}{ne,e}{Reply,Full}{3.4cm}\systemgraph{\dataLATfnEXEC}{\axislat}{ne,e}{Reply,Full}{3.4cm}&
\systemgraph{\dataTPUTfnSIGNS}{\axistput}{ns,ed,cmac}{None,PK,MAC}{4.95cm}\systemgraph{\dataLATfnSIGNS}{\axislat}{ns,ed,cmac}{None,PK,MAC}{4.95cm}
\end{tabular}
\caption{Characteristics of \RDB{} deployed on the Google Cloud. \emph{Left}, the maximum performance of a single replica that receives clients transactions, optionally executes them (Full), and sends replies. \emph{Right}, the performance of \PBFT{} with $\n = 16$ replicas that uses no cryptography (None), uses ED25519 public-key cryptography (PK), or CMAC-AES message authentication codes (MAC) to authenticate messages.}\label{fig:multibft_system}
\end{figure}

In Figure~\ref{fig:multibft_system}, \emph{left}, we present the maximum performance of a single replica that receives clients transactions, optionally executes them (\emph{Full}), and sends replies (without any consensus steps). In this figure, we count the total number of client transactions that are completed during the experiment. As one can see, the system can receive and respond to up-to-$\SI{551}{\kilo\txn\per\sec}$, but can only execute up-to-$\SI{217}{\kilo\txn\per\sec}$.

In Figure~\ref{fig:multibft_system}, \emph{right}, we present the maximum performance of \PBFT{} running on $\n=16$ replicas as a function of the cryptographic primitives used to provide \emph{authenticated communication}. In specific, \PBFT{} can either use digital signatures or message authentication codes. For this comparison,  we compare \PBFT{} using: (1) a baseline that does not use any message authentication (\emph{None}); (2) ED25519 digital signatures for all messages (\emph{DS}); and (3) CMAC+AES message authentication codes for all messages exchanged between messages and ED25519 digital signatures for client transactions. As can be seen from the results, the costs associated with digital signatures are huge, as their usage reduces performance by $86\%$, whereas message authentication codes only reduce performance by $33\%$.

\subsection{The Consensus Protocols}\label{ss:cons}
We evaluate the performance of \MultiBFT{} by comparing it with a representative sample of efficient  practical primary-backup consensus protocols:
\paragraph*{\PBFT{}~\cite{pbftj}} We use a heavily optimized  out-of-order implementation that uses message authentication codes.
\paragraph*{\MultiBFT{}} Our \MultiBFT{} implementation follows the design outlined in this paper. We have chosen to turn \PBFT{} into a concurrent consensus protocol. We test with three variants: \MultiBFT{}$_{\n}$ runs $\n$ concurrent instances, \MultiBFT{}$_{\f+1}$ runs $\f+1$ concurrent instances (the minimum to provide the benefits outlined in Section~\ref{sec:improve}), and \MultiBFT{}$_3$ runs $3$ concurrent instances.
\paragraph*{\ZZ{}~\cite{zyzzyvaj}} As described in Section~\ref{ss:limit}, \ZZ{} has a optimal-case path due to which the performance of \ZZ{} provides an upper-bound for any primary-backup protocol (when no failures occur). Unfortunately, the failure-handling of \ZZ{} is costly, making \ZZ{} unable to deal with any failures efficiently.
\paragraph*{\SBFT{}~\cite{sbft}} This protocol uses threshold signatures to minimize communication during the state exchange that is part of its Byzantine commit algorithm. Threshold signatures do not reduce the communication costs for the primary to propose client transactions, which have a major influence on performance in practice (See  Section~\ref{ss:limit}), but can potentially greatly reduce all other communication costs.
\paragraph*{\HS{}~\cite{hotstuff}} As \SBFT{}, \HS{} uses threshold signatures to minimize communication. The state-exchange of \HS{} has an extra phase compared to \PBFT{}. This additional phase simplifies changing views in \HS{}, and enables \HS{} to regularly switch primaries (which limits the influence of any faulty replicas). Due to this design, \HS{} does not support out-of-order processing (see Section~\ref{ss:limit}). As a consequence, \HS{} is more affected by message delays than by bandwidth. In our implementation, we have used the efficient single-phase event-based variant of \HS{}.

\begin{figure*}
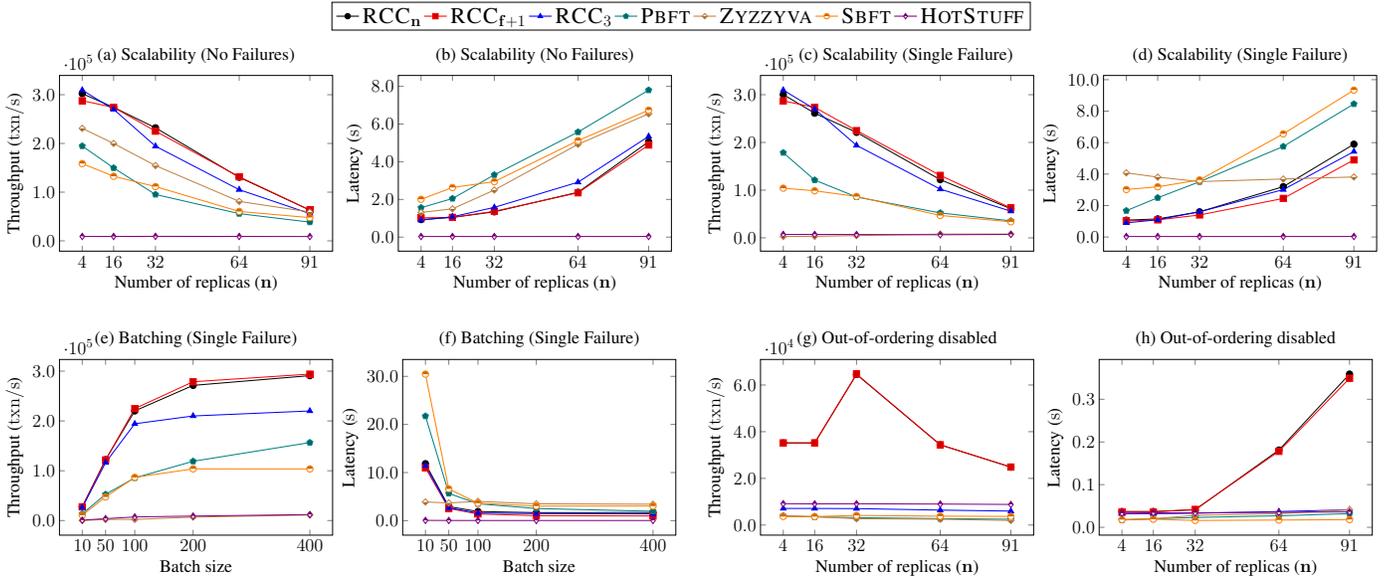

\centering
\scalebox{0.5}{\ref{mainlegend}}\\[5pt]
\setlength{\tabcolsep}{1pt}
\begin{tabular}{cc@{\quad}cc}
\resultgraph{\dataTputNodesFF}{(a) Scalability (No Failures)}{\axisnodes}{\axistput}{\axisticksnodes}{legend to name={mainlegend},legend columns=-1,legend entries={\MultiBFT{}$_{\n}$,\MultiBFT{}$_{\f+1}$,\MultiBFT{}$_3$,\PBFT{},\ZZ{},\SBFT{},\HS{}}}&
\resultgraph{\dataLatNodesFF}{(b) Scalability (No Failures)}{\axisnodes}{\axislat}{\axisticksnodes}{}&
\resultgraph{\dataTputNodes}{~~~~(c) Scalability (Single Failure)}{\axisnodes}{\axistput}{\axisticksnodes}{}&
\resultgraph{\dataLatNodes}{(d) Scalability (Single Failure)}{\axisnodes}{\axislat}{\axisticksnodes}{}\\
\\
\resultgraph{\dataTputBatch}{(e) Batching (Single Failure)}{\axisbatches}{\axistput}{\axisticksbatches}{}&
\resultgraph{\dataLatBatch}{(f) Batching (Single Failure)}{\axisbatches}{\axislat}{\axisticksbatches}{}&
\resultgraph{\dataTputNodesChain}{(g) Out-of-ordering disabled}{\axisnodes}{\axistput}{\axisticksnodes}{}&
\resultgraph{\dataLatNodesChain}{(h) Out-of-ordering disabled}{\axisnodes}{\axislat}{\axisticksnodes}{}
\end{tabular}
\vspace{-3mm}
\caption{Evaluating system throughput and average latency incurred by \MultiBFT{} and other consensus protocols.}\label{fig:multibft_plots}
\vspace{-3mm}
\end{figure*}

\subsection{The Experiments}\label{ss:meas}

To be able to answer Question~\ref{q:1}--\ref{q:5}, we perform four experiments in which we measure the performance of \MultiBFT{}. In each experiment, we measure the \emph{throughput} as the number of transactions that are executed per second, and we measure the \emph{latency} as the time from when a client sends a transaction to the time where that client receives a response.  We run each experiment for $\SI{180}{\second}$: the first $\SI{60}{\second}$ are warm-up, and measurement results are collected over the next $\SI{120}{\second}$. We average our results over three runs. The results of all four experiments can be found in Figure~\ref{fig:multibft_plots}.

In the first experiment, we measure the \emph{best-case performance} of the consensus protocols as a function of the number of replicas when all replicas are non-faulty. We vary the number of replicas between $\n=4$ and $\n=91$ and we use a batch size of $\SI{100}{\txn\per\batch}$. The results can be found in Figure~\ref{fig:multibft_plots}, (a) and (b).

In the second experiment, we measure the performance of the consensus protocols as a function of the number of replicas during failure of a single replica. Again, we vary the number of replicas between $\n=4$ and $\n=91$ and we use a batch size of $\SI{100}{\txn\per\batch}$. The results can be found in Figure~\ref{fig:multibft_plots}, (c) and (d).

In the third experiment, we measure the performance of the consensus protocols as a function of the number of replicas during failure of a single replica while varying the batch size between $\SI{10}{\txn\per\batch}$ and $\SI{400}{\txn\per\batch}$. We use $\n = 32$ replicas. The results can be found in Figure~\ref{fig:multibft_plots}, (e) and (f).

In the fourth and final experiment, we measure the performance of the consensus protocols when outgoing primary bandwidth is not the limiting factor. We do so by disabling \emph{out-of-order processing} in all protocols that support out-of-order processing. This makes the performance of these protocols inherently bounded by the message delay and not by network bandwidth. We study this case by varying the number of replicas between $\n=4$ and $\n=91$ and we use a batch size of $\SI{100}{\txn\per\batch}$. The results can be found in Figure~\ref{fig:multibft_plots}, (g) and (h).

\subsection{Discussion}\label{ss:discuss}
From the experiments, a few obvious patterns emerge. First, we see that increasing the batch size ((e) and (f)) increases performance of all consensus protocols (\ref{q:5}). This is in line with what one can expect (See Section~\ref{ss:limit} and Section~\ref{sec:propose}). As the gains beyond $\SI{100}{\txn\per\batch}$ are small, we have chosen to use $\SI{100}{\txn\per\batch}$ in all other experiments.

Second, we see that the three versions of \MultiBFT{} outperform all other protocols, and the performance of \MultiBFT{} with or without failures is comparable ((a)--(d)). Furthermore, we see that adding concurrency by adding more instances improves performance, as \MultiBFT{}$_{3}$ is outperformed by the other \MultiBFT{} versions. On small deployments with $\n=4, \dots, 16$ replicas, the strength of \MultiBFT{} is most evident, as our \MultiBFT{} implementations approach the maximum rate at which \RDB{} can execute transactions (see Section~\ref{ss:rdb}). 

Third, we see that \MultiBFT{} easily outperforms \ZZ{}, even in the best-case scenario of no failures ((a) and (b)). We also see that \ZZ{} is---indeed---the fastest primary-backup consensus protocol when no failures happen. This underlines the ability of \MultiBFT{}, and of concurrent consensus in general, to reach throughputs no primary-backup consensus protocol can reach.  We also notice that \ZZ{} fails to deal with failures ((c) and (d)), in which case its performance plummets, a case that the other protocols have no issues dealing with. 

Finally, due to the lack of out-of-order processing capabilities in \HS{}, \HS{} is uncompetitive to out-of-order protocols. When we disable out-of-order processing for all other protocols ((g) and (h)), the strength of the simple design of \HS{} shows: its event-based single-phase design outperforms all other primary-backup consensus protocols. Due to the concurrent design of \MultiBFT{}, a non-out-of-order-\MultiBFT{} is still able to greatly outperform \HS{}, however, as the non-out-of-order variants of \MultiBFT{} balance the entire workload over many primaries. Furthermore, as the throughput is not bound by any replica resources in this case (and only by network delays), the non-out-of-order variants \MultiBFT{}$_{\f+1}$ and \MultiBFT$_\n$ benefit from increasing the number of replicas, as this also increases the amount of concurrent processing (due to increasing the number of instances).

\paragraph*{Summary} \MultiBFT{} implementations achieve up to $2.77\times$, $1.53\times$, $38\times$, and $82\times$ higher throughput than 
\SBFT{}, \PBFT{}, \HS{}, and \ZZ{} in single failure experiments.
\MultiBFT{} implementations achieve up to $2\times$, $1.83\times$, $33\times$, and $1.45\times$ higher throughput than 
\SBFT{}, \PBFT{}, \HS{}, and \ZZ{} in no failure experiments, respectively.

Based on these observations, we conclude that \MultiBFT{} delivers on the promises of concurrent consensus. \MultiBFT{} provides more throughput than any primary-backup consensus protocol can provide (\ref{q:1}). Moreover, \MultiBFT{} provides great scalability if throughput is only bounded by the primaries: as the non-out-of-order results show, the load-balancing capabilities of \MultiBFT{} can even offset inefficiencies in other parts of the consensus protocol (\ref{q:2}, \ref{q:3}). Finally, we conclude that \MultiBFT{} can efficiently deal with failures (\ref{q:4}). Hence, \MultiBFT{} meets the design goals~\ref{goal:consensus}--\ref{goal:recover} that we set out in Section~\ref{sec:design}.

\subsection{Analyzing \MultiBFT{} as a Paradigm}
Finally, we experimentally illustrate the ability of \MultiBFT{} to act as a paradigm. To do so, we apply \MultiBFT{} to not only \PBFT{}, but also to \ZZ{} and \SBFT{}. In Figure~\ref{fig:rcc-paradigm}, we plot the performance of these three variants of \MultiBFT{}: \MultiP{} (\MultiBFT{}+\PBFT), \MultiZ{} (\MultiBFT{}+\ZZ), and \MultiS{} (\MultiBFT{}+\SBFT). To evaluate the scalability of these protocols, we perform experiments in the optimistic setting with no failures and $\m= \n$ concurrent instances.

It is evident from these plots that all \MultiBFT{} variants achieve extremely high throughput. As \SBFT{} and \ZZ{} only require linear communication in the optimistic case, \MultiS{} and \MultiZ{} are able achieve up to $3.33\times$ and $2.78\times$ higher throughputs than \MultiP{}, respectively.

Notice that \MultiS{} consistently attains equal or higher throughput than \MultiZ{}, even though \ZZ{} scales better than \SBFT{}. This phenomena is caused by the way \MultiZ{} interacts with clients.  In specific, like \ZZ{}, \MultiZ{} requires its clients to wait for responses of all $\n$ replicas. Hence, clients have to wait longer to place new transactions, and consequently \MultiZ{} requires more clients than \MultiS{} to attain maximum performance. Even if we ran \MultiZ{} with $5$ million clients, the largest amount at our disposal, we would not see maximum performance. Due to the low single-primary performance of \ZZ{}, this phenomena does not prevent \ZZ{} to already reach its maximum performance.

\begin{figure}[t]
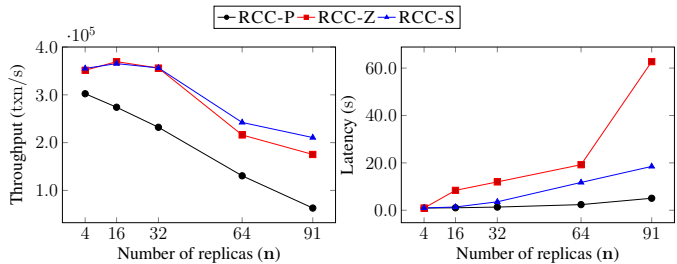

\centering
    \scalebox{0.4}{\ref{paralegend}}\\[5pt]
    \vspace{-2mm}
    \setlength{\tabcolsep}{0.2pt}
    \begin{tabular}{cc}
        \paradigmgraph{\dataRCCParadigm}{\axisnodes}{\axistput}{\axisticksnodes}{legend to name={paralegend},legend columns=-1,legend entries={\MultiP{},\MultiZ{},\MultiS{}}}&
        \paradigmgraph{\dataRCCParadigmLat}{\axisnodes}{\axislat}{\axisticksnodes}{}
    \end{tabular}
    \caption{Evaluating system throughput and latency attained by three \MultiBFT{} variants: \MultiP{}, \MultiZ{} and \MultiS{} when there are no failures.}
    \label{fig:rcc-paradigm}
\end{figure}

\section{Related Work}

In Section~\ref{ss:limit}, we already discussed well-known primary-backup consensus protocols such as \PBFT{}, \ZZ{}, and \HS{} and why these protocols are underutilizing resources. Furthermore, there is abundant literature on consensus and on primary-backup consensus in specific (e.g.,~\cite{wild,scaling,untangle,encybd}). Next, we shall focus on the few works that deal with either \emph{improving throughput and scalability} or with \emph{improving resilience}, the two strengths of \MultiBFT{}

\paragraph*{Parallelizing consensus}
Several recent consensus designs propose to run several primaries concurrently, e.g.,~\cite{rbft,mirbft,omada,sarek}. None of these proposals satisfy all design goals of \MultiBFT{}, however. In specific, these proposals all fall short with respect to maximizing potential throughput in all cases, as none of these proposals satisfy the \emph{wait-free} design goals~\ref{goal:always} and~\ref{goal:recover} of \MultiBFT{}. 

\begin{example}
The \MirBFT{} protocol proposes to run concurrent instances of \PBFT{}, this in a similar fashion as \MultiBFT{}. The key difference is how \MirBFT{} deals with failures: \MirBFT{} operates in global \emph{epochs} in which a \emph{super-primary} decides which instances are enabled. During any failure, \MirBFT{} will switch to a new epoch via a view-change protocol that temporarily shuts-down all instances and subsequently reduces throughput to zero. This is in sharp contrast to the wait-free design of \MultiBFT{}, in which failures are handled on a per-instance level. In Figure~\ref{fig:mirbft}, we illustrated these differences in the failure recovery of \MultiBFT{} and \MirBFT{}.

As is clear from the figure, the fully-coordinated approach of \MirBFT{} results in substantial performance degradation during failure recovery.  Hence, \MirBFT{} does not meet design goals~\ref{goal:always} and~\ref{goal:recover}, which is sharply limits the throughput of \MirBFT{} when compared to \MultiBFT{}.
\end{example}

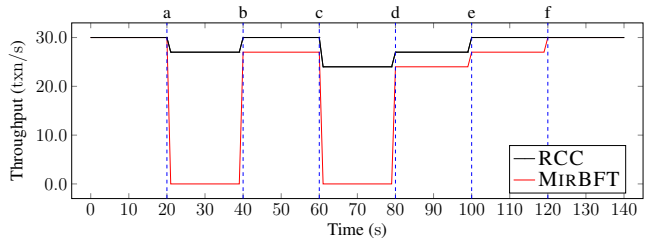
\begin{figure}
    \begin{tikzpicture}[plot]
        \begin{axis}[ylabel={Throughput (\si{\txn\per\second})},xlabel={Time (\si{\second})},mark size=0pt,width=500pt,legend pos=south east,cycle list name=mycyclelistex,xmin=-5,xmax=145,clip=false]
            \addplot table[x={t},y={multibft}] {\dataExampleMirBFT};
            \addplot table[x={t},y={mirbft}] {\dataExampleMirBFT};  
            \addplot table[x={t},y={multibft}] {\dataExampleMirBFT}; 

            \draw[dashed,blue,thick] ({axis cs:20,0}|-{rel axis cs:0,0}) -- ({axis cs:20,0}|-{rel axis cs:0,1})
                                     ({axis cs:40,0}|-{rel axis cs:0,0}) -- ({axis cs:40,0}|-{rel axis cs:0,1})
                                     ({axis cs:60,0}|-{rel axis cs:0,0}) -- ({axis cs:60,0}|-{rel axis cs:0,1})
                                     ({axis cs:80,0}|-{rel axis cs:0,0}) -- ({axis cs:80,0}|-{rel axis cs:0,1})
                                     ({axis cs:100,0}|-{rel axis cs:0,0}) -- ({axis cs:100,0}|-{rel axis cs:0,1})
                                     ({axis cs:120,0}|-{rel axis cs:0,0}) -- ({axis cs:120,0}|-{rel axis cs:0,1});
                                     
            \node[above] at ({axis cs:20,0}|-{rel axis cs:0,1}) {a};
            \node[above] at ({axis cs:40,0}|-{rel axis cs:0,1}) {b};
            \node[above] at ({axis cs:60,0}|-{rel axis cs:0,1}) {c};
            \node[above] at ({axis cs:80,0}|-{rel axis cs:0,1}) {d};
            \node[above] at ({axis cs:100,0}|-{rel axis cs:0,1}) {e};
            \node[above] at ({axis cs:120,0}|-{rel axis cs:0,1}) {f};
        \legend{\MultiBFT{},\MirBFT{}};
        \end{axis}
    \end{tikzpicture}
    \caption{Throughput of \MultiBFT{} versus \MirBFT{} during instance failures with $\m = 11$ instances. At (a), primary $\Primary{1}$ fails. In \MultiBFT{}, all other instances are unaffected, whereas in \MirBFT{} all replicas need to coordinate recovery. At (b), recovery is finished. In \MultiBFT{}, all instances can resume work, whereas \MirBFT{} halts an instance due to recovery. At (c) primaries $\Primary{1}$ and $\Primary{2}$ fail. In \MultiBFT{}, $\Primary{2}$ will be recovered at (d) and $\Primary{1}$ at (e) (as $\Primary{1}$ failed twice, its recovery in \MultiBFT{} takes twice as long). In \MirBFT{}, recovery is finished at (d), after which \MirBFT{} operates with only $\m = 9$ instances. At (e) and (f), \MirBFT{} decides that the system is sufficiently reliable, and \MirBFT{} enables the remaining instances one at a time.}\label{fig:mirbft}
\end{figure}

\paragraph*{Reducing malicious behavior}
Several works have observed that traditional consensus protocols only address a narrow set of malicious behavior, namely behavior that prevents any progress~\cite{rbft,spin,prime,aardvark}. Hence, several designs have been proposed to also address behavior that impedes performance without completely preventing progress. One such design is \RBFT{}, which uses concurrent primaries \emph{not} to improve performance---as we propose---but only to mitigate throttling attacks in a way similar to what we described in Section~\ref{sec:improve}. In practice, the design of \RBFT{} results in poor performance at high costs.

\HS{}~\cite{hotstuff}, \Name{Spinning}~\cite{spin}, and \Name{Prime}~\cite{prime} all proposes to minimize the influence of malicious primaries by replacing the primary every round. This would not incur the costs of \RBFT{}, while still reducing---but not eliminating---the impact of faulty replicas to severely reduce throughput. Unfortunately, these protocols follow the design of primary-backup consensus protocols and, as discussed in Section~\ref{sec:propose}, these designs are unable to achieve throughputs close to those reached by a concurrent consensus such as \MultiBFT{}.

\paragraph*{Concurrent consensus via sharding}
Several recent works have proposed to speed up consensus-based systems by incorporating sharding, this either at the data level (e.g.,~\cite{caper,blockchaindb,cerberus,sharper,ahl}) or at the consensus level (e.g.,~\cite{geobft}). In these approaches only a small subset of all replicas, those in a single shard, participate in the consensus on any given transaction, thereby reducing the costs to replicate this transaction and enabling concurrent transaction processing in independent shards. As such, sharded designs can promise huge scalability benefits for easily-sharded workloads. To do so, sharded designs utilize a weaker failure model than the fully-replicated model \MultiBFT{} uses, however. Consider, e.g., a sharded system with $z$ shards of $\n = 3\f+1$ replicas each. In this setting, the system can only tolerate failure of up to $\f$ replicas in a single shard, whereas a fully-replicated system using $z$ replicas could tolerate the failure of any choice of $\lfloor\dsfrac{(z\n-1)}{3}\rfloor$ replicas. Furthermore, sharded designs typically operate consensus protocols such as \PBFT{} in each shard to order local transactions, which opens the opportunity of concurrent consensus and \MultiBFT{} to achieve even higher performance in these designs.

\section{Conclusion}
In this paper, we proposed \emph{concurrent consensus} as a major step toward enabling high-throughput and more scalable consensus-based database systems. We have shown that concurrent consensus is in theory  able to achieve throughputs that primary-backup consensus systems are unable to achieve. To put the idea of concurrent consensus in practice, we proposed the \MultiBFT{} paradigm that can be used to make normal primary-backup consensus protocols \emph{concurrent}.  Furthermore, we showed that \MultiBFT{} is capable of making consensus-based systems more resilient to failures by sharply reducing the impact of faulty replicas on the throughput and operations of the system. We have also put the design of the \MultiBFT{} paradigm to the test by implementing it in \RDB{}, our high-performance resilient blockchain fabric, and comparing it with state-of-the-art primary-backup consensus protocols. Our experiments show that \MultiBFT{} is able to fulfill the promises of concurrent consensus, as it significantly outperforms other consensus protocols and provides better scalability. As such, we believe that \MultiBFT{} opens the door to the development of new high-throughput resilient database and federated transaction processing  systems.

\paragraph*{Acknowledgements} 
We would like to acknowledge Sajjad Rahnama and Patrick J.\ Liao for their help during the initial stages of this work.

\balance
\bibliographystyle{IEEEtran}
\bibliography{bibliography}

\end{document}